\pgfplotsset{compat=1.16}
\crefname{equation}{}{}
\Crefname{definition}{Def.}{Def.}
\Crefname{figure}{Fig.}{Fig.}
\crefname{figure}{Fig.}{Fig.}
\Crefname{theorem}{Thm.}{Thm.}
\crefname{theorem}{Thm.}{Thm.}
\Crefname{section}{Sec.}{Sec.}
\title{Hyperproperty Verification 
as CHC Satisfiability}
\author{Shachar Itzhaky}
\affiliation{
  \institution{The Technion}
  \country{Israel}
}
\author{Sharon Shoham}
\affiliation{
  \institution{Tel-Aviv University}
  \country{Israel}
}
\author{Yakir Vizel}
\affiliation{
  \institution{The Technion}
  \country{Israel}
}
\begin{document}

\begin{abstract}
Hyperproperties specify the behavior of a system or systems across multiple executions, and are being
recognized as an important extension of regular temporal properties. 
So far, such properties have resisted comprehensive treatment by modern software model-checking approaches such as IC3/PDR,
due to the need to find not only a (relational) inductive invariant but also a \emph{total} alignment of different executions that facilitates simpler inductive invariants.

We show how this treatment is achieved via a reduction
from the verification problem of  (a certain class of) $\forall^*\exists^*$ hyperproperties to Constrained Horn Clauses (CHCs). Our starting point is a set of universally quantified formulas in first order logic (modulo theories) that encode the verification of $\forall^*\exists^*$ hyperproperties over infinite-state transition systems. The first-order encoding uses uninterpreted predicates to capture the (1) witness function for existential quantification over traces, (2) alignment of executions, and (3) corresponding inductive invariant. The encoding also uses first-order theories (e.g., arithmetic and arrays) for modeling the transition system.
Such an encoding was previously proposed for $k$-safety properties, and its extension to (a class of) $\forall^*\exists^*$ properties is straightforward.
Unfortunately, finding a satisfying model for the resulting first-order formulas is beyond reach for modern first-order satisfiability solvers. In general, state-of-the-art solvers struggle when quantifiers are combined with theories and uninterpreted predicates. A notable exception is the CHC fragment of first-order logic. The unique structure of CHCs makes it possible to adopt software model checking techniques for solving them. Indeed, CHC solving is a fruitful research area, supported by effective automation.
Unfortunately, the natural encoding of hyperproperty verification problems is not Horn.
Previous works tackled this obstacle by developing specialized techniques for solving the aforementioned first-order formulas. In contrast, we show that the same problems can be encoded as CHCs and solved by existing CHC solvers, benefiting from the ongoing development of these already successful tools.

Our key technical contribution is a logical transformation of the aforementioned sets of first-order formulas to equi-satisfiable sets of CHCs. The transformation to CHCs is sound and complete, and applying it to the first-order formulas that encode verification of 
hyperproperties leads to a CHC encoding of these problems.
We implemented the CHC encoding in a prototype tool and show that, using existing CHC solvers for solving the CHCs, the approach already outperforms state-of-the-art tools for hyperproperty verification by orders of magnitude.


\end{abstract} 

\maketitle

\section{Introduction}
\label{intro}

Hyperproperties~\cite{DBLP:journals/jcs/ClarksonS10} are properties that relate multiple execution traces, either taken from a single program or from multiple programs. Checking such properties is known as \emph{relational verification}, and is essential when
reasoning about security policies, program equivalence, concurrency protocols, etc. 
Existing specification languages for hyperproperties~\cite{DBLP:conf/post/ClarksonFKMRS14,DBLP:conf/cav/BaumeisterCBFS21,DBLP:conf/pldi/SousaD16} extend standard ones, e.g., temporal logic or Hoare logic, with (explicit or implicit) quantification over traces.
This shifts the focus from properties of individual traces to properties of \emph{sets} of traces.
For example, 
$k$-safety~\cite{DBLP:journals/jcs/ClarksonS10} is a class of hyperproperties, where $k$ universal quantifiers are used to define a relational invariant over
states originating from $k$ 
traces.

This paper addresses verification of hyperproperties with $\forall^* \exists^*$
quantification over traces and a body of the form $\square\phi$ (where $\square$ stands for ``globally'').
This fragment captures many hypersafety (e.g., the aforementioned $k$-safety)
and hyperliveness properties,
and was 
shown by \cite{DBLP:conf/cav/BeutnerF22} to express a wide class of
properties of interest, including generalized non-interference (GNI)~\cite{DBLP:conf/sp/McCullough88}.

Verification of hyperproperties
is more challenging than
verification of single-trace properties, and, as a result, has
gained a lot of attention in recent years.
Unlike single-trace properties, verification of properties of $k$ traces requires the discovery of \emph{relational} inductive invariants, which define the
relation between states of $k$ execution traces.
Since the construction of invariants that hold between \emph{any} $k$ reachable states is hard (or even impossible, depending on the assertion logic),
proving hyperproperties often hinges on finding
an \emph{alignment} of any $k$ traces such that
the invariant only needs to describe aligned states.
%

In the case of $k$-safety properties, an alignment of traces is often given by a \emph{self composition}~\cite{DBLP:conf/csfw/BartheDR04,DBLP:conf/sas/TerauchiA05}
of the program, composing different copies of the program (or several different programs) together, \eg, by running the different copies in lockstep~\cite{DBLP:conf/fm/ZaksP08}
or by more sophisticated composition schemes, \eg,~\cite{DBLP:conf/esop/EilersMH18}.
While self composition allows to reduce $k$-safety verification to standard safety verification,
this reduction requires to choose the alignment of the different copies a-priori.
%
%
The choice of alignment, however, has a significant effect on
the complexity of the inductive invariants themselves,
as demonstrated
by~\cite{DBLP:conf/cav/ShemerGSV19}. This
renders the standard reduction from $k$-safety verification to safety verification, based on a fixed alignment, impractical
in many cases.
As a result, finding a good alignment as part of relational verification
has been a topic of interest in recent years~\cite{DBLP:conf/pldi/SousaD16,DBLP:conf/cav/FarzanV19,DBLP:conf/cav/UnnoTK21,DBLP:conf/cav/BaumeisterCBFS21,DBLP:conf/cav/BeutnerF22}.

In the case of hyperliveness properties that stem from the use of existential
quantification over traces (\ie $\forall^*\exists^*$ properties), complexity
rises further.
Verifying such hyperliveness properties
calls for finding ``witness'' traces that match the universally quantified traces, in
addition to the relational invariant and alignment.
This reduces verification of  $\forall^*\exists^*$ properties to the problem of inferring three ingredients: 
(i)~a witness function for existential quantification over traces,
(ii)~an alignment of traces, and
(iii)~a corresponding relational inductive invariant.
These ingredients are all interdependent: different witnesses call for different alignments and give rise to different invariants,
with different levels of complexity.
It is therefore desirable to search for the \emph{combination} of the three of them \emph{simultaneously},
which is the focus of this paper.


%


We propose a novel reduction from verification of hyperproperties
with a $\forall^* \exists^*$ quantification prefix 
over infinite-state transition systems to satisfiability
of Constrained Horn Clauses (CHCs)~\cite{DBLP:conf/sas/BjornerMR13,DBLP:conf/birthday/BjornerGMR15}, also known as CHC-SAT.
Importantly, the reduction does not fix any of the aforementioned verification ingredients, in particular, the alignment, a-priori. Instead,
it is based on a CHC encoding of their joint requirements.
The unique structure of CHCs makes it possible to adopt software model checking techniques (e.g. interpolation~\cite{DBLP:conf/cav/McMillan14}, IC3/PDR~\cite{DBLP:conf/sat/HoderB12,DBLP:conf/cav/KomuravelliGC14}) for solving them. 
Our reduction, thus, allows to use
state-of-the-art CHC solvers~\cite{DBLP:conf/fmcad/FedyukovichKB17,DBLP:conf/fmcad/HojjatR18,DBLP:conf/cav/Gurfinkel22,DBLP:conf/pldi/ZhuMJ18} to achieve a highly efficient hyperproperty
verification procedure.

While it is known that safety verification can be reduced to CHC-SAT, we
are the first to show how inferring the 
combination of a witness function, a trace alignment and an inductive invariant 
for hyperproperties
of the $\forall^*\exists^*$-fragment can be reduced to CHC-SAT.

The first step of our reduction to CHC-SAT is an encoding of
the joint requirements of the 
witness-alignment-invariant ingredients 
as a set of universally quantified formulas in first-order logic (FOL) modulo theories,
where uninterpreted predicates  capture the
witness, alignment and invariant, 
and first-order theories (\eg, arithmetic and arrays) are used for modeling the transition system and the requirements.
Such an encoding has been proposed by~\cite{DBLP:conf/cav/ShemerGSV19} for the problem of
finding an invariant together with an alignment in the context of 
verification of $k$-safety properties (the universally quantified  subset of this fragment). 
We extend their FOL encoding 
to $\forall^*\exists^*$ properties, based on the game semantics introduced in~\cite{DBLP:conf/cav/BeutnerF22}.

%

%
%
%

Unfortunately, the 
resulting FOL formulas 
are beyond what modern first-order satisfiability solvers can handle due to a combination of quantifiers with theories and uninterpreted predicates.
In particular, the FOL formulas are not in the form of CHCs.
As a result, previous works~\cite{DBLP:conf/cav/ShemerGSV19,DBLP:conf/cav/UnnoTK21} that used a similar encoding could not rely on a (single) CHC-SAT query to find the alignment and invariant simultaneously. Instead,~\cite{DBLP:conf/cav/ShemerGSV19} resorted to an enumeration of potential alignments, using a separate CHC-SAT query to search for an inductive invariant (in a restricted language) for each candidate alignment.
\cite{DBLP:conf/cav/UnnoTK21} developed a specialized solver that is able to handle these non-CHC formulas directly.

In contrast to previous works, we introduce a second step where we transform the  set of universally quantified FOL formulas to a set of universally quantified CHCs. 
%
This step---
which is also the key technical contribution of the paper--- allows us to use any CHC solver 
for hyperproperty verification, and benefit from current and future developments in this lively area of research.
%
We emphasize that the transformation to CHCs is surprising since it allows us to overcome  a seemingly unavoidable 
obstacle: a disjunction of atomic formulas involving unknown predicates, which arises from the encoding of a choice between different alignment and witness options.

We implemented the reduction of $\forall^*\exists^*$-hyperproperty verification to CHC-SAT in a tool called \ours,
on top of Z3~\cite{DBLP:conf/tacas/MouraB08}, using \Spacer~\cite{DBLP:conf/cav/Gurfinkel22} as a CHC solver. Our results show that \ours is
very efficient in verifying $\forall^*\exists^*$-hy\-per\-pro\-perties, outperforming
the state-of-the-art~\cite{DBLP:conf/cav/UnnoTK21,DBLP:conf/cav/BeutnerF22,DBLP:conf/cav/ShemerGSV19} by orders of magnitude.

\smallskip\noindent
Our main contributions are:
\begin{itemize}[topsep=0pt,leftmargin=1.5em]
  \item We develop a
  satisfiability-preserving transformation of first-order formulas of a certain form to CHCs. The transformation is accompanied by a bi-directional translation of solutions.
  \item We apply the transformation to obtain, for the first time, a sound and complete reduction from verification of \aehyper (w.r.t.\ a game semantics)
  to CHC-SAT. 
    The reduction 
    captures searching for an alignment, an $\exists^*$-witness function and an inductive invariant simultaneously. It is applicable to infinite-state transition systems, with the caveat that
    their branching degree needs to be finite (bounded by a constant) if the hyperproperty includes $\exists^*$ quantification.
  \item To handle $\exists^*$ in the presence of unbounded nondeterminism, we incorporate into the CHC encoding 
  a sound abstraction 
  based on a set of underapproximations (``restrictions'').
  \item We implement a tool, \ours, that constructs CHCs for
  \aehyper specifications, and solves them using \Spacer.
  In most cases, \ours discovers the solution completely automatically, while in some, 
  it uses predicate abstraction, based on user-provided predicates.
\end{itemize}

\section{Overview}\label{sec:overview}

We illustrate our approach for verifying hyperproperties by reduction to CHC-SAT. 
We start with the simpler 
case of $k$-safety properties, followed by the more general case of $\forall^* \exists^*$ hyperproperties.


\subsection{Motivating Example}
\label{sec:motivating}


\begin{figure}[t]
\footnotesize
\centering
\subfloat[] {
\begin{tabular}{l}
\\[1.1em]
\lstset{basicstyle=\ttfamily\scriptsize}
\lstinputlisting[language=C,mathescape=true]{running-example.txt}
\\[3.5em]
\fboxsep=2pt
\fbox{\scalebox{0.9}{
$
\renewcommand{\arraystretch}{1}
\begin{array}{l@{}}
a_1 < a_2 \land b_1 > b_2 \to {} \\
\quad\forall\pi_1:\lnot(a < b),\pi_2:\lnot(a < b)\cdot
\square(c_1 > c_2)
\end{array}
$
}}
\end{tabular}
\label{intro:running-example}
}
\subfloat[] {
\hspace{-1.5cm}
\begin{minipage}{0.7\textwidth}
\resizebox{1\textwidth}{!}{
$
\renewcommand\arraystretch{1.2}
\begin{array}{r@{}l}
    \Init(V_1) \land \Init(V_2) \land a_2 >a_1 \land b_2 < b_1 \rightarrow{} & Inv(V_1,V_2)  \\
    Inv(V_1,V_2)\land A_{\{1\}}(V_1,V_2) \wedge \Tr(V_1,V_1') \land V_2 = V_2'
 \rightarrow{} & Inv(V_1',V_2') \\
    Inv(V_1,V_2)\land A_{\{2\}}(V_1,V_2) \land V_1 = V_1' \land \Tr(V_2,V_2')
 \rightarrow{} & Inv(V_1',V_2') \\
    Inv(V_1,V_2)\land A_{\{1,2\}}(V_1,V_2) \land \Tr(V_1,V_1') \land \Tr(V_2,V_2')
 \rightarrow{} & Inv(V_1',V_2') \\
    Inv(V_1,V_2) \land A_{\{1\}}(V_1,V_2) \rightarrow{} & a_1 < b_1 \\
    Inv(V_1,V_2) \land A_{\{2\}}(V_1,V_2) \rightarrow{} & a_2 < b_2 \\
    Inv(V_1,V_2) \land A_{\{1,2\}}(V_1,V_2) \rightarrow{} & (a_1 < b_1 \land a_2 < b_2) \\
    & \qquad \lor (a_1 \ge b_1 \land a_2 \ge b_2)\\
    \multicolumn{2}{l}{\hspace{2cm} Inv(V_1,V_2)\rightarrow ((a_1 \geq b_1 \land a_2 \geq b_2)  \rightarrow  c_1 > c_2)}\\
    \multicolumn{2}{l}{\hspace{2cm} Inv(V_1,V_2) \rightarrow A_{\{1\}}(V_1,V_2) \lor A_{\{2\}}(V_1,V_2) \lor A_{\{1,2\}}(V_1,V_2)}
\end{array}
$
}
\end{minipage}
\label{intro:running-example:arbiter}
}

\caption{(a) A program that computes the sum of squares of integer interval $[a,b)$ with a 2-safety specification for it, and (b) its first-order encoding. } %
\vspace{-10pt}
\end{figure}

As a means for highlighting the challenges in verifying hyperproperties, and, in particular, in reducing the problem to CHC solving,
we present
the example program \code{squaresSum} and its $2$-safety specification from \cite{DBLP:conf/cav/ShemerGSV19} in \Cref{intro:running-example}.
Given positive integers $a < b$, the program computes the sum of squares of all  integers in the interval $[a,b)$.
\code{squaresSum} is monotone in the sense that as the input interval increases, so does the output $c$. Formally, this is a $2$-safety property\iflong, as it relates
two traces of the program;
it \else that \fi requires that whenever two traces satisfy the pre-condition $[a_2,b_2) \subset [a_1,b_1)$, they also satisfy
the post-condition $c_1 > c_2$, where variable indices correspond to the traces that they represent.
This is a special case of $k$-safety, where the relational property is checked at the end of the executions. More generally, \SH{rephrased:}we consider $k$-safety properties where the relational property is 
specified at designated \emph{observation points} (explained in \Cref{background:hyperproperties}).

To verify the $2$-safety property, a prominent approach is to reduce the problem to a regular safety verification problem by composing the program with itself (known as ``self composition''). There are (infinitely) many possibilities for aligning the traces in the composed system, and the alignment chosen has direct impact on the complexity of the inductive invariant needed to establish safety.
For example, if the two traces of \code{squaresSum} are aligned in lockstep, then initially  $c_1=c_2$, after one step, $c_1<c_2$, and only later on, $c_1>c_2$. Showing that $c_1>c_2$ at the end requires tracking the difference $c_1-c_2$, which is a complex value because it involves the sum of squares itself.
This cannot be captured by an inductive invariant in first-order logic, and is therefore beyond reach for state-of-the-art solvers.
On the other hand, if the second trace, whose input is the smaller interval, ``waits'' for $a_1$ and $a_2$ to coincide before proceeding in lockstep, then the property that $c_1 > c_2$ becomes inductive (except for the first step), greatly simplifying the inductive invariant. It is therefore important to consider the alignment and the (relational) inductive invariant together.

The requirements that the alignment and inductive invariant need to satisfy can be formulated in first-order logic~\cite{DBLP:conf/cav/ShemerGSV19}.
To do so,
we denote the program variables by $V = \langle a,b,c \rangle$. We  express the initial states and program steps as formulas over $V$ (and primed variant $V'$) :
\iflong
\[
\begin{array}{l@{~~}c@{~~}l}
\Init(V) & \eqdef & a >0 \land b >a \land c=0 \\
\Tr(V,V') & \eqdef & a<b \land c' = c + a\cdot a \land a' = a+1 \land b' = b
\end{array}
\]
\else
$\Init(V) \eqdef a >0 \land b >a \land c=0$,
$\Tr(V,V') \eqdef a<b \land c' = c + a\cdot a \land a' = a+1 \land b' = b$.
\fi
To reason about two traces, we use two copies of $V$, denoted $V_1$ and $V_2$.
We introduce ``unknown'' predicates $\Inv, A_{\{1\}}, A_{\{2\}}, A_{\{1,2\}}$ over $\langle V_1,V_2 \rangle$ to capture the inductive invariant and desired alignment of the traces.
$\{A_\au\}_\au$ define an \emph{arbiter} that, when $A_\au$ is satisfied, schedules the steps of the traces according to $\au$
(for example, schedule $u = \{1\}$ stands for a step in trace $1$ and a stutter in trace $2$). The arbiter therefore determines the alignment of the traces.
The inductive invariant $\Inv$  relates states of the two copies of the program, making it \emph{relational}.
%

The problem of searching for the alignment and the inductive invariant simultaneously is then posed as a satisfiability  problem (modulo the theory of arithmetic)
of the formulas in \Cref{intro:running-example:arbiter}.
To ensure that the arbiter, which determines the alignment, does not avoid violations of the post-condition by making one of the traces stutter forever s.t.\ it never reaches its final state, formulas 5-7 require that the arbiter only schedule a trace if it has not exited the loop, \SH{added:}unless both traces exited the loop (in which case both are scheduled). This ``validity'' requirement means that, at the latest, the arbiter must schedule a trace when the other reaches the final state.
Formulas 1-4 then ensure that all states that are reachable, subject to the steps permitted by the arbiter, must satisfy $\Inv$.
Specifically, the first formula ensures the initiation condition of the inductive invariant: the invariant satisfies
the pre-condition and includes all the initial states of the composed system. Formulas 2-4 ensure
the consecution of the invariant under every choice the arbiter makes.
The 8th formula ensures the safety of
the invariant and the last formula mandates that there
is always at least one choice that is enabled,
and that the system never reaches a ``stuck'' state.

%

An interpretation for the unknown predicates $\Inv, A_{\{1\}}, A_{\{2\}}, A_{\{1,2\}}$ defines an arbiter and a corresponding inductive invariant.
A possible solution is

\smallskip\noindent
$
\renewcommand*{\arraystretch}{1.2}
\begin{array}{l}
A_{\{1\}}(V_1,V_2) ~\eqdef~   a_1 < a_2 \lor  (b_2 \le a_1 < b_1)  \qquad
A_{\{2\}}(V_1,V_2) ~\eqdef~   \bot \\
A_{\{1,2\}}(V_1,V_2) ~\eqdef~   (a_1 = a_2 \land a_1<b_2)\lor a_1 \ge b_1 \\

\Inv(V_1,V_2) ~\eqdef~  0 < a_1 \leq b_1 \land 0 < a_2 \leq b_2 \land
\big((a_1 < a_2 \land c_1 \geq c_2) \lor
     (a_1 \geq a_2 \land c_1 > c_2)\big)
\end{array}
$

\smallskip
This solution captures the arbiter that makes the second trace wait until $a_1 = a_2$, \SH{elaborated here:}then makes both traces proceed together until the second one exits its loop, in which case the first trace continues to execute alone until it also exits its loop and both traces are again (vacuously) scheduled together. The solution to $\Inv$ captures the corresponding inductive invariant previously discussed.

\subsection{Challenges in Encoding Hyperproperty Verification as CHC-SAT}
The formulas of \Cref{intro:running-example:arbiter}, with the exception of the last one, are constrained \emph{Horn} clauses. That is, when the implications in these formulas are converted to disjunctions, at most one predicate application appears positively in each clause.

Alas, the presence of the last formula precludes direct application of existing CHC solvers. 
The problem is the disjunction on the right hand side of the implication.
Such a disjunction appears to be crucial for a correct encoding of the problem. 
The reason is that uninterpreted predicates designate semantic \emph{relations}.
With such predicates denoting the choice of schedule,
it is easy to drop into a vacuous solution where some states have no corresponding choice and are essentially ``stuck'', unsoundly making a post-condition violation unreachable.
Encoding the requirement that every state have a schedule results in a clause with multiple occurrences of positive literals, capturing inherent disjunctions over the possible choices, which are not Horn.
In particular, these disjunctions cannot be eliminated by renaming~\cite{DBLP:journals/jacm/Lewis78}.


Previous works tackled this obstacle either by employing explicit enumeration of alignments that satisfy the non-Horn clause to avoid the disjunction~\cite{DBLP:conf/cav/ShemerGSV19}, or by  developing specialized techniques that are able to handle such disjunctions~\cite{DBLP:conf/cav/UnnoTK21}.

\subsection{Our Approach: Transformation to CHC}
In this paper, we show that the problem of searching for an alignment together with a (relational) inductive invariant can be encoded using CHCs, allowing us to directly reduce the verification problem to CHC solving. Importantly, the reduction is both sound and complete.

A key insight of our
reduction to CHC-SAT
is the use of ``doomed'' states as a way to avoid the problematic disjunction over all choices of schedules. 
We refer to a
given state as ``doomed'' if it necessarily reaches a state that violates the
hyperproperty along \emph{every} valid alignment
(as opposed to \emph{some} in the direct encoding). 
Importantly, due to this conjunctive nature, doomed states lend themselves to a Horn encoding. If an initial state is identified as
doomed (i.e., the CHCs are unsatisfiable), then the property is violated and a counterexample can be retrieved. Otherwise, if the set of initial states
does not intersect the set of doomed states, then the hyperproperty is proved.
Moreover, given an interpretation of the unknown predicates in which the initial states are not doomed, an alignment
and a corresponding inductive invariant can be retrieved.

Based on this insight, in \Cref{transformation}, we develop
a general transformation of formulas of a certain form,  to an equi-satisfiable set of CHCs.
Furthermore, we provide a  transformation of solutions between the two formulations (in both directions).
The first-order formulas to which the transformation is applicable  follow the overall structure of the formulas in \Cref{intro:running-example:arbiter}, but are somewhat more general. For example, some of the unknown predicates may have additional arguments, which turn out to be useful 
when considering a broader class of hyperproperties beyond $k$-safety ($\forall^*\exists^*$).

In \Cref{safety} we apply the transformation of \Cref{transformation} to reduce $k$-safety verification to CHC-SAT.
When applying the transformation on the formulas encoding our running example (\Cref{intro:running-example:arbiter}), we obtain the following set of CHCs over unknown predicates $D_{\{1\}}, D_{\{2\}}, D_{\{1,2\}}$:

\begin{figure}
\resizebox{0.9\textwidth}{!}{
$
\renewcommand\arraystretch{1.2}
\newcommand\Ds[1]{D_{\{1\}\!}(#1)\land D_{\{2\}\!}(#1) \land D_{\!\{1,2\}\!}(#1)}
\begin{array}{r@{}l}
    \Ds{V_1,\!V_2} \land
    \Init(V_1) \land \Init(V_2) \land
      a_2 > a_1 \land b_2 < b_1 \rightarrow{}
        & \bot  \\
    \lnot(a_1 \geq b_1 \land a_2 \geq b_2 \rightarrow
      c_1 > c_2) \rightarrow{}
        & D_{\{1\}}(V_1,V_2) \\
    \lnot(a_1 \geq b_1 \land a_2 \geq b_2 \rightarrow
      c_1 > c_2) \rightarrow{}
        & D_{\{2\}}(V_1,V_2) \\
    \lnot(a_1 \geq b_1 \land a_2 \geq b_2 \rightarrow
      c_1 > c_2) \rightarrow{}
        & D_{\{1,2\}}(V_1,V_2) \\
    \lnot(a_1 < b_1) \rightarrow{} & D_{\{1\}}(V_1,V_2) \\
    \lnot(a_2 < b_2) \rightarrow{} & D_{\{2\}}(V_1,V_2) \\
    \lnot(a_1 < b_1 \land a_2 < b_2) \land \neg (a_1 \ge b_1 \land a_2 \ge b_2) \rightarrow{}
        & D_{\{1,2\}}(V_1,V_2) \\
    \Ds{V_1',V_2'} \land
      \Tr(V_1,V_1') \land V_2 = V_2' \rightarrow{}
        & D_{\{1\}}(V_1,V_2) \\
    \Ds{V_1',V_2'} \land
      V_1 = V_1' \land \Tr(V_2,V_2') \rightarrow{}
        & D_{\{2\}}(V_1,V_2) \\
    \Ds{V_1',V_2'} \land
      \Tr(V_1,V_1') \land \Tr(V_2,V_2') \rightarrow{}
        & D_{\{1,2\}}(V_1,V_2) \\
\end{array}
$
}
\caption{CHC encoding of \Cref{intro:running-example}.}
\label{safety:running-example:doomed}
\end{figure}


Here, an unknown predicate $D_u$ represents states that are ``doomed'' if schedule $u$ is chosen. The first CHC requires that no initial state that satisfies the pre-condition is  completely doomed, i.e., for every such state there is a schedule for which it is not doomed.
The remaining CHCs encode the properties of doomed states for each schedule.
For example, the CHCs where $\D_{\{1\}}$ is in the head (right hand side of the implication) imply that a state is doomed for schedule $\{1\}$ if: (a)~it violates the post-condition,
(b)~it already exited the loop and hence trace $1$ cannot be the only trace to be scheduled, or (c)~it is the pre-state of a transition taken by $1$ leading to a post-state that is doomed for every choice $u$.


A solution to the CHCs in \Cref{safety:running-example:doomed} can be obtained from the solution to the formulas in \Cref{intro:running-example:arbiter} by $D_{u} \eqdef \lnot(\Inv\land A_{u})$ for every $u \in \{ \{1\}, \{2\}, \{1,2\}\}$.
\SH{suggestion: use space in doomed fig to include solution for $A, Inv$ and underneath solution for $D$}

More generally, in \Cref{transformation}, we show a bi-directional transformation of solutions.

\subsection{Beyond $k$-Safety}

Our transformation to CHCs is not limited to an encoding of $k$-safety, but also generalizes to  hyperproperties that use $\forall^* \exists^*$ quantification over traces, as presented in \Cref{sec:beyond}.

Hyperproperties with existential trace quantification become meaningful in the presence of nondeterminism in the program. 
For an example of such a property, consider a nondeterministic variant of \code{squaresSum} where the assignment \code{c += a * a} is replaced by \code{if (*) c += a * a}. That is, the increment of \code{c} may nondeterministically be skipped.
We may now wish to verify that, if $[a_2,b_2) = [a_1,b_1)$, then for every trace from input $[a_1,b_1)$ there exists a trace from input $[a_2,b_2)$ such that when both terminate, $c_1 \neq c_2$. This is a $\forall \exists$-hyperproperty.

To verify such properties, a ``witness'' function is needed to map the universally quantified traces to the corresponding existentially quantified traces such that the body of the formula holds for the combination of the traces.
Even if a witness function is known, to verify that the combination of the traces satisfies the body of the formula, we  still need to find a proper alignment of the traces and an inductive invariant. As in the case of $k$-safety, these components are all interdependent, making it desirable to search for all of them together.

In general, the witness function for the existentially quantified traces may need to depend on the full universally quantified traces. However,~\cite{DBLP:conf/cav/BeutnerF22} define a sound but incomplete \emph{game semantics}, in which the witness function essentially constructs the existentially quantified  traces step-by-step, in response to moves of a ``falsifier'' who reveals the universally quantified traces step-by-step.

We show in \Cref{sec:beyond:finite} that the problem of searching for a step-by-step witness function, an alignment and a (relational) inductive invariant can be encoded in first-order logic, and the encoding is amenable to our transformation to CHCs. This results in a sound and complete CHC encoding of the game semantics of~\cite{DBLP:conf/cav/BeutnerF22} for transition systems with finite branching.

The idea in the $\forall^* \exists^*$-first-order encoding is to let the unknown predicates $A_u$  specify not only the schedules chosen by the arbiter but also the choice of existentially quantified traces for the witness function.
To do so, we assign a unique label to each of the possible transitions, and use these labels to identify the transitions along the traces.
In this encoding, instead of $u$ denoting a schedule only, it now denotes both a schedule and a choice of labels identifying the next transitions in the existentially quantified traces according to the witness function.
%
Furthermore, the $A_u$ predicates receive additional arguments that represent the next labels along the universally quantified traces.

For example, in the nondeterministic variant of \code{squaresSum}, there are at most two possible transitions in each control location. We therefore introduce two labels to distinguish between these possibilities: \code{i} for ``increment'' and \code{s} for ``skip''.
The predicates that describe the schedules and the choices of existentially quantified traces for the $\forall \exists$-hyperproperty of interest are
$A_{\{1\},\text{\code{i}}}, A_{\{2\},\text{\code{i}}}, A_{\{1,2\},\text{\code{i}}}, A_{\{1\},\text{\code{s}}}, A_{\{2\},\text{\code{s}}}, A_{\{1,2\},\text{\code{s}}}$. They are defined over $\langle V_1,V_2, \Lblone \rangle$, where $\Lblone$ ranges over the possible labels.

Note that in this encoding, the $A_u$ predicates are no longer defined over $\langle V_1,V_2 \rangle$ only, but have additional arguments for the labels of the universally quantified traces, while $\Inv$ does not. Thus, the reduction to CHCs applies our transformation in a more general setting than \Cref{intro:running-example:arbiter}.
Furthermore, since $u$ denotes both a schedule and a choice of labels for the existentially quantified traces,
the number of $A_u$ predicates depends on the number of labels. To ensure that there are finitely many predicates, we require the transition system to have a finite branching degree (otherwise, the space of possible labels becomes infinite).

Finally, in \Cref{sec:beyond:infinite}, we extend our approach to handle infinite branching in the transition system, which can result, for example, from reading an input from an infinite domain.
To do so, we introduce another first-order encoding that roughly replaces the infinitely-many concrete choices of transitions by finitely-many abstract choices. Unlike the cases of $k$-safety and $\forall^* \exists^*$-hyperproperties with finite branching, the resulting  encoding is sound but incomplete w.r.t. the game semantics.
By applying our transformation, we obtain a sound (albeit incomplete) reduction to CHC solving.




\section{Background}
\label{background}
In this section we provide the necessary background on first order logic modulo theories, hyperproperties, and constrained Horn clauses.

\paragraph{First Order Logic}
We use many-sorted first-order logic to model systems and their properties. We assume the reader is familiar with the syntax and semantics of first-order logic. \SH{added:}A first-order theory $\theory$ is a set of formulas (usually closed under first-order deduction). Models of $\theory$ are first-order structures that satisfy all of the formulas in $\theory$.
Throughout the paper, we fix a background first-order theory $\theory$ and denote its signature by $\Sigma$.
Sorts and symbols in $\Sigma$ are called \emph{interpreted}.
For example, $\theory$ can be linear integer arithmetic (LIA), in which case $\Sigma$ includes sort $\sort{int}$, interpreted constant symbols $\ldots,-1,0,1,\ldots$ for the integers, an interpreted function symbol $+$ for addition, an interpreted predicate 
symbol $<$, etc.
The background theory $\theory$ can also be the combination of theories (\eg, LIA combined with the theory of arrays).
Whenever we consider satisfiability of a formula (or a set of formulas), we mean satisfiability modulo $\theory$; a formula $\varphi$ is satisfiable modulo $\theory$ if there exists a model of $\theory$ that satisfies $\varphi$. Similarly, by validity we refer to validity modulo $\theory$: $\varphi$ is valid modulo $\theory$ if every model of $\theory$ satisfies $\varphi$.


\paragraph{Transition Systems}
A (semantic, labeled) \emph{transition system} is a tuple $\TSsem = (\States, \Lbl,\InitSem, \TrSem)$, where $\States$ is a  set of states; $\Lbl$ is a set of labels; $\InitSem \subseteq \States$ is the set of initial states; and
$\TrSem \subseteq \States \times \Lbl \times \States$ is the transition relation.
A trace of $\TSsem$ is a maximal 
sequence of states $t = s_0,s_1,\ldots$ such that 
for every $i \geq 0$ there exists $\ell \in \Lbl$ such that $(s_i,\ell,s_{i+1})\in \TrSem$.
We denote by $t[i]$ the $i$'th state in $t$.
We further denote the set of traces that start from  a state $s$ by $\Traces(s)$,
and the set of all traces of $\TSsem$ by $\Traces$.

We usually consider transition systems that are defined symbolically.
A (symbolic, labeled) transition system is a tuple $\TS = (\Vone, \Lblone, \Init, \Tr)$, where $\Vone$ is a vocabulary, \ie, a vector of (logical) variables, each associated with a sort from $\Sigma$, denoting state variables; $\Lblone$ is a label variable, with a designated sort; $\Init$ is a formula over $\Sigma$ with free variables $\Vone$, and $\Tr$ is a formula over $\Sigma$ with free variables $\Vone\cup\{\Lblone\}\cup \Vone'$,
where $\Vone'$ consists of the primed variants of $\Vone$. 
%
The semantics of $\TS$ is given by a (semantic) transition system $\TSsem = (\States, \Lbl,\InitSem, \TrSem)$ with the following correspondence:
each state in $\States$ is a valuation to $\Vone$; the set of labels $\Lbl$ consists of the set of values that $\Lblone$ can take; the set of initial states $\InitSem \subseteq \States$ consists of all  valuations to $\Vone$ that satisfy $\Init$; and
the transition relation $\TrSem \subseteq \States \times \Lbl \times \States$ consists of the valuations for the composite vocabulary $\Vone\cup\{\Lblone\}\cup \Vone'$ that satisfy $\Tr$. 
For simplicity, we consider transition systems $\TS$ in which 
$\TrSem$ is total, \ie, $\forall s \in \States \ \exists \ell \in \Lbl, s' \in \States \cdot (s,\ell,s')\in \TrSem$\footnote{w.l.g.; $\Tr$ can always be replaced by  $\Tr \vee ((\forall \Lblone \ \forall \Vone' \cdot \neg \Tr) \wedge \Vone' = \Vone)$, which corresponds to adding self loops to states that have no outgoing transition.} (this assumption ensures that all traces are infinite).
We say that $\TS$ is \emph{deterministic} when $\forall s\in\States,\ell\in\Lbl\cdot\big|\{s'\,|\,\TrSem(s,\ell,s')\}\big|= 1$.

In the sequel, unless explicitly stated otherwise, we consider symbolic transition systems. 
By convention, $\TSsem$ will always represent the semantic counterpart of $\TS$.

\subsection{Hyperproperties and Their Specification}
\label{background:hyperproperties}
To express hyperproperties, we consider a fragment of the relational logic OHyperLTL~\cite{DBLP:conf/cav/BaumeisterCBFS21}, 
which we call \aehyper. 
Formulas in this fragment include $\forall^* \exists^*$ quantification over traces, and specify a relational property $\phi$ that needs to hold \emph{globally}
along the examined traces when the traces reach certain \emph{observation points} (hence the ``O'' in OHyperLTL). Observation points for each of the traces are determined by (non-temporal) formulas $\xi_i$. As shown in~\cite{DBLP:conf/cav/BaumeisterCBFS21}, observation points increase the expressiveness of the logic, as they allow to consider asynchronous executions of the different traces (whereas in HyperLTL~\cite{DBLP:conf/post/ClarksonFKMRS14}, all traces execute synchronously).

\paragraph{Syntax} Formally, given a vector of variables $\Vone$, formulas in 
\aehyper are of the form:
%
\[
\varphi ~=~ \psi \rightarrow \forall \pi_1 : \xi_1, \ldots, \pi_l : \xi_l\cdot \exists \pi_{l+1} : \xi_{l+1},\ldots,\pi_k:\xi_k\cdot~ \square\phi
\]
%
\noindent
where $\pi_i$ are trace variables whose intended valuations are taken from $\Traces$, $\xi_i$ are (non-temporal) formulas with free variables $V$ that determine \emph{observation points} along each of the $k$ traces,
and $\psi,\phi$ are (non-temporal) formulas with free variables $\Vone_1 \cup\ldots \cup \Vone_k$, expressing relational properties for the $k$ traces.
$\psi$ is a pre-condition that is assumed to hold initially, and
$\phi$ needs to globally hold at the observation points of the traces.
$V_j$ denotes a copy of $V$ where all variables are indexed by $j$.
We refer to the variables in $V_j$ as the state variables of the $j$'th trace (namely, $\pi_j$).
%

When $l = k$, \ie, all quantifiers are universal, $\varphi$ specifies a \emph{$k$-safety} property. A relational pre/post specification, as used in our motivating example, is a special case of a $k$-safety property where the observable points are the final states (which are augmented with self loops). For example, \Cref{intro:running-example} presents the \aehyper specification of the motivating example.

When $l<k$, the formula also includes existential quantifiers, extending expressiveness to include some hyperliveness properties.
An example of a security hyperliveness property that can be expressed in \aehyper
is \emph{generalized non-interference (GNI)}~\cite{DBLP:conf/sp/McCullough88}.
GNI requires that for any two traces $\pi_1$ and $\pi_2$ there exists a trace $\pi_3$ whose high (secret) inputs agree with $\pi_1$ and whose low (public) inputs and outputs agree with $\pi_2$.
GNI can be expressed in \aehyper using the following formula:
\[
\true \rightarrow \forall \pi_1 : \texttt{obs} \ \forall \pi_2 : \texttt{obs} \ \exists \pi_3: \texttt{obs} \cdot~ \square (h_1 = h_3 \land l_2 = l_3 \land o_2 = o_3)
\]
where $\texttt{obs}$ denotes the observable events along each trace.

\paragraph{Semantics}
\aehyper formulas are interpreted over transition systems.
Intuitively, $\varphi = \psi \rightarrow \forall \pi_1 : \xi_1, \ldots, \pi_l : \xi_l\cdot \exists \pi_{l+1} : \xi_{l+1},\ldots,\pi_k:\xi_k\cdot~ \square\phi$ holds in a transition system
if from every $k$ initial states that jointly satisfy the pre-condition $\psi$,
for every $l$ traces from the first $l$ states there exist corresponding $k-l$ traces from the remaining $k-l$ states
s.t.\ the composed states of all traces globally satisfy $\phi$, when the traces are projected to their observation points.
%

Formally, given a transition system $\TS$ and $\varphi$ as above,
we refer to a tuple $(s_1,\ldots,s_k)$ of $k$ states of $\TS$  as a \emph{composed state}.
A composed state defines a valuation to $V_1\cup \ldots \cup V_k$, where $s_j$ is the valuation of $V_j$.
A composed state is initial if $s_i \in \InitSem$ for every $1 \leq i \leq k$.

We say that $\TS \models \varphi$ if
for every initial composed state $\overline{s} = (s_1,\ldots,s_k)$ such that $\overline{s} \models \psi$ the following holds:
for every  $t_1,\ldots,t_l \in \Traces(s_1) \times \cdots \times \Traces(s_l)$  there exist   $t_{l+1},\ldots,t_{k} \in \Traces(s_{l+1})\times \cdots \times \Traces(s_k)$ such that
$\observe{t_1}{\xi_1},\ldots,\observe{t_k}{\xi_k}\models\square\phi$, where $\observe{t_i}{\xi_i}$ is the projection (filtering) of trace $t_i$ to states satisfying $\xi_i$.
The semantics of $\square\phi$ is that $t_1',\ldots,t_k'\models\square\phi$ iff
$\forall i\leq \text{min}(|t_1'|,\ldots,|t_k'|)
\cdot (t_1'[i],\ldots,t_k'[i])\models\phi$, where $t_j'[i]$ denotes the $i$'th state in the sequence $t_j'$.
%

Note that the semantics is oblivious to the transition labels
since labels are only implicit in traces.
Labels will become useful in \Cref{sec:beyond}, where we will use them to identify transitions along traces.

\iflong
\begin{remark}
We define the semantics of hyperproperties w.r.t.\ a single transition system
for simplicity of the presentation only.
It is straightforward to extend the definition, as well as the approach presented in this paper,
to the case where each trace quantifier refers to traces of a different transition system.
\end{remark}

\begin{remark}
Our approach generalizes to the case where $\square \phi$ is replaced by any temporal safety property
via the standard automata-theoretic approach to model checking,
where a finite-state monitor for the property is composed with the transition system and the property reduces to checking that the error state of the monitor is never reached. 
\SH{remove for space?}(Note that even though a safety property is used for the body of the formula, the property as a whole is not necessarily a safety property due to the existential quantification over traces~\cite{DBLP:conf/cav/BeutnerF22}.)
\end{remark}

\else
\begin{remark}
To simplify the presentation we consider hyperproperties defined w.r.t.\ a single transition system.
The extension to multiple transition systems is straightforward.
Similarly, $\square \phi$ can be generalized to any temporal safety property
via the standard automata-theoretic approach to model checking.
\end{remark}
\fi

\subsection{Constrained Horn Clauses (CHCs)}
Given a background theory $\theory$ over signature $\Sigma$, constrained Horn clauses are
defined over a signature $\Sigma'$ that extends $\Sigma$ with a set $\CHCPreds$ of (uninterpreted) predicates.
As opposed to the symbols in $\Sigma$, which are interpreted by $\theory$, the predicates in $\CHCPreds$ are \emph{uninterpreted} (sometimes called \emph{unknown}).
In the context of CHCs, first-order formulas over $\Sigma$ are often called \emph{constraints}.
A CHC is a first-order formula of the form
\[
\forall \X \cdot \bigwedge_i P_i(\X_i) \wedge \varphi(\X) \to H(\X_H)
\]
where $\X$ is a vector of (logical) variables;
$P_i \in \CHCPreds$ (not necessarily distinct, \ie, it is possible that $P_{i_1} = P_{i_2}$ for $i_1 \neq i_2$);
$H$ is either $\bot$ or a predicate from $\CHCPreds$;
$\X_i, \X_H \subseteq \X$; and $\varphi$ is a constraint\iflong (\ie, a first-order formula over $\Sigma$)
whose free variables are a subset of $\X$\fi.
%
The universal quantification over $\X$ is often omitted.
A system of CHCs is a finite set of CHCs.

\emph{Satisfiability} of a set of CHCs (modulo $\theory$) is defined in the usual way (as first-order formulas).
CHC solvers attempt to determine the satisfiability of a set of CHCs by searching for
a satisfying model of the CHCs where the interpretations of the unknown predicates, $\CHCPreds$, are definable in $\theory$.
Such a model is called a solution.
Formally, a \emph{solution} to a set of CHCs maps every predicate in $\CHCPreds$
to a formula over $\Sigma$ that defines it
such that substituting all occurrences of the predicates by their definitions
results in formulas that are valid modulo $\theory$.
If a set of CHCs has a solution then it is satisfiable. However,
the converse may not hold due to limited expressivity of first-order logic.

\section{General Transformation to CHCs}
\label{transformation}

\textit{This section describes the technical details of a satisfiability-preserving transformation to CHCs; it can be safely skipped without hindering the reader's understanding of the following material.
Only \Cref{thm:trans} is used later to uphold our soundness guarantee.}

\smallskip
The transformation described herein lets us translate
a set of formulas, which adheres to a specific FOL scheme,
to an equi-satisfiable set of CHCs.
Later we show how verification of a \aehyper property can be captured by a set of formulas of the aforementioned scheme,
and use the described transformation to retrieve a set of CHCs.
This allows us to then reason about the correctness of the \aehyper property by deciding the satisfiability 
of the CHCs.



\begin{figure}[t]
\hspace{-0.05\textwidth}
\resizebox{1.1\textwidth}{!}{
$
\renewcommand{\arraystretch}{1.2}
\begin{array}{l|l}
\subfloat[] {$
\begin{array}{l@{\,}r@{}l}
    & \alpha(\V) \rightarrow{} & \Inv(\V)  \\
    & \Inv(\V) \land \beta(\V) \rightarrow{} & \bot \\
    \eqForall & \Inv(\V) \wedge A_\au(\V,\W) \land \gamma_\au(\V,\W) \rightarrow{} & \bot \\
    \eqForall & \Inv(\V)\land A_\au(\V,\W) \land \delta_\au(\V, \V', \W)
 \rightarrow{} & \Inv(\V') \\

    & \Inv(\V) \rightarrow{} & \displaystyle \bigvee_{\au\in\U} A_\au(\V,\W)
\end{array}
$
\label{transformation:formulas:arbiter}
}
&
\subfloat[] {$
\begin{array}{l@{}r@{}l}
    & \displaystyle \bigwedge_{\au\in\U} \!\!\D_\au(\V,\W)
    \land \alpha(\V) \rightarrow{} & \bot \\[-0.5em]
    \eqForall & \beta(\V) \rightarrow{} & \D_\au(\V,\W) \\
    \eqForall & \gamma_\au(\V,\W) \rightarrow{} & \D_\au(\V,\W) \\
    \eqForall & \displaystyle \bigwedge_{\au'\in\U}\!\! \D_{\au'}(\V',\W')  \land \delta_\au(\V, \V', \W)
 \rightarrow{} & \D_\au(\V,\W)
 \end{array}
$
\label{transformation:formulas:doomed}
}
\end{array}
$}
\vspace{-0.5cm}

\raggedright\scalebox{0.8}{$\big(\hspace{1pt}\eqForall = \forall u\in\U\big)$}
\caption{Formula scheme before (a) and after (b) the transformation.}
\label{transformation:formulas}
\end{figure}

Consider the scheme in \Cref{transformation:formulas:arbiter} for a set of formulas over
a signature $\Sigma'$ that extends the signature $\Sigma$ of the background theory by unknown predicates $\Inv$ and $\{A_{\au}\}_{\au\in\U}$, for some finite set $U$.
$\V,\V',\W$ denote disjoint vocabularies, \ie, vectors of (logical) variables that are implicitly universally quantified.
%

A row prefixed by $\eqForall$ indicates $|\U|$ formulas, where $u$ is substituted by all corresponding values from $\U$.
\iflong The symbols \fi $\alpha, \beta, \gamma_\au, \delta_\au$ designate \emph{constraints}\iflong: formulas over the appropriate vocabularies using interpreted symbols only, with no occurrences of $\Inv$ or $A_\au$.\else (no occurrence of $\Inv$ or $A_\au$). \fi
%
%

At a high level, formulas 1 and 4 in \Cref{transformation:formulas:arbiter} use $\Inv$ to capture an inductive invariant of the ``states'' (valuations to $\V$)  reachable from $\alpha$ by ``transitions'' of $\delta_\au$, restricted according to a choice $\au \in \U$ of an ``arbiter'' $\{A_\au\}_\au$. Formula 2 establishes the fact that the reachable states are disjoint from some ``bad states'' $\beta$. Formulas 3 allow to enforce that the arbiter meets certain requirements, and formula 5 ensures that the arbiter makes a choice for every ``state'' in $\Inv$.

\begin{example}
For our running example, we have $\V=\langle V_1,V_2\rangle = \langle a_1,b_1,c_1,a_2,b_2,c_2\rangle$, $\V'=\langle V_1',V_2'\rangle = \langle a_1',b_1',c_1',a_2',b_2',c_2'\rangle$, and $\W=\langle\rangle$
(The extra vocabulary $\W$ will come into use later in the paper).
$\U$ is the set of arbitration choices
$\{\{1\},\{2\},\{1,2\}\}$,
and the corresponding completion of the constraint holes $\alpha, \beta, \gamma_\au, \delta_\au$ is easily discernible.\SI{or is it?}
(Note that a constraint on the right of $\to$
corresponds to its negation on the left.)
\end{example}


Note that the last formula in \Cref{transformation:formulas:arbiter} is not a CHC since
its head is a disjunction of unknown predicates. Our goal is to transform the set of formulas in \Cref{transformation:formulas:arbiter} into an equi-satisfiable set of CHCs.
%
To do so, we perform a stepwise transformation of the formulas in \Cref{transformation:formulas:arbiter} that results in the system of CHCs in \Cref{transformation:formulas:doomed}.

The first intuition is that, since every model satisfies
$\Inv(\V) \rightarrow \bigvee_{\au\in\U} A_\au(\V,\W)$
(the last formula), we can replace $\Inv(\V)$ in \Cref{transformation:formulas:arbiter} with
$\Inv(\V)\land \bigvee_{\au\in\U} A_\au(\V,\W)$,
or, equivalently,
$\bigvee_{\au\in\U} \Inv(\V)\land A_\au(\V,\W)$.
We apply this transformation to the first two formulas and to the right-hand side of the fourth one to obtain
the formulas in \Cref{transformation:formulas-arbiter_step1}.

\begin{figure}
\begin{tabular}{@{}c|@{}c@{}}
\begin{minipage}{0.57\textwidth} 
\scalebox{0.75}{
$
\renewcommand{\arraystretch}{1.2}
\begin{array}{l@{\,}r@{}l}
    & \alpha(\V) \rightarrow{} & \displaystyle \bigvee_{\au\in\U} \Inv(\V)\land A_\au(\V,\W)  \\
    & \displaystyle\left(\bigvee_{\au\in\U} \Inv(\V)\land A_\au(\V,\W)\right) \land \beta(\V) \rightarrow{} & \bot \\
    \eqForall & \Inv(\V) \wedge A_\au(\V,\W) \land \gamma_\au(\V,\W) \rightarrow{} & \bot \\
    \eqForall & \Inv(\V)\land A_\au(\V,\W) \land \delta_\au(\V, \V', \W)
 \rightarrow{} & \displaystyle \bigvee_{\au\in\U} \Inv(\V')\land A_\au(\V',\W') \\

    & \Inv(\V) \rightarrow{} & \displaystyle \bigvee_{\au\in\U} A_\au(\V,\W)
\end{array}
$
}

\vspace{-1.5em}
\raggedright\scalebox{0.8}{$\big(\hspace{1pt}\eqForall = \forall u\in\U\big)$}
\caption{First step of CHC transformation. \label{transformation:formulas-arbiter_step1}}
\end{minipage}
&
\begin{minipage}{0.45\textwidth}
\scalebox{0.75}{
$
\renewcommand{\arraystretch}{1.2}
\begin{array}{l@{\,}r@{}l}
    \\
    & \alpha(\V) \rightarrow{} & \displaystyle \bigvee_{\au\in\U} \AInv_\au(\V,\W)  \\
    \eqForall & \AInv_\au(\V,\W) \land \beta(\V) \rightarrow{} & \bot \\
    \eqForall & \AInv_\au(\V,\W) \land \gamma_\au(\V,\W) \rightarrow{} & \bot \\
    \eqForall & \AInv_\au(\V,\W) \land \delta_\au(\V, \V', \W)
 \rightarrow{} & \displaystyle \bigvee_{\au\in\U} \AInv_\au(\V',\W') \\[3em]
\end{array}
$
}
\caption{Second step of CHC transformation. \label{transformation:formulas-ainv}}
\end{minipage}
\end{tabular}
    
\end{figure}

The second formula now has a disjunction on the left-hand side of the implication.
While this is not in Horn form, it can be equivalently transformed to $|U|$ CHCs in the following manner:
$$ \forall \au\in\U \cdot{}  \Inv(\V)\land A_\au(\V,\W) \land \beta(\V) \rightarrow{}  \bot $$

It now becomes evident that $\Inv$ only ever occurs conjoined with some $A_\au$, except for the last formula.
However, a closer look reveals that the last formula is now redundant w.r.t.\@ satisfiability of the entire system:
since $\Inv$ only occurs in conjunction with $\bigvee_{\au\in\U} A_\au(\V,\W)$ on the right-hand side of implications, any model that satisfies the other formulas (all besides the last one) can be modified into one that satisfies the last formula as well simply by conjoining the interpretation of $\Inv$ with that of $\forall \W \cdot \bigvee_{\au\in\U} A_\au(\V,\W)$\yv{Need to be specific about this universal quantifier}.
We can therefore create an equisatisfiable set of formulas by redefining $\Inv(\V)\land A_\au(\V,\W)$ as new uninterpreted predicates $\AInv_\au(\V,\W)$, and dropping the last formula from \Cref{transformation:formulas-arbiter_step1}.
This results in the formulas in \Cref{transformation:formulas-ainv}.

We prove that 
\Cref{transformation:formulas:arbiter} and 
\Cref{transformation:formulas-ainv} are equisatisfiable
by 
model transformations: in one direction, define 
$\AInv_\au(\V,\W) \eqdef \Inv(\V)\land A_\au(\V,\W)$, and in the other 
$\Inv(\V) \eqdef \forall \W\cdot\bigvee_{\au\in\U}\AInv_\au(\V,\W)$ and 
$A_\au(\V,\W) \eqdef \AInv_\au(\V,\W)$.

The set of formulas 
in \Cref{transformation:formulas-ainv} is still not Horn. However, a closer look reveals that it is ``co-Horn''. Namely, instead of having at most one positive literal, each formula has at most one negative literal.
The last step is to transform the co-Horn set into a proper set of CHCs. This is done by
negating each implication and apply the following renaming~\cite{DBLP:journals/jacm/Lewis78}
$\D_\au \eqdef \lnot\AInv_\au$.
This transformation results in the 
set of CHCs 
in \Cref{transformation:formulas:doomed}.



This last transformation step is also satisfiability-preserving. Moreover, it admits a bi-directional model transformation
by negating the solutions for $\AInv_\au$, respectively $\D_\au$.
More precisely, models of \Cref{transformation:formulas-ainv} can be transformed into models of \Cref{transformation:formulas:doomed} by defining $\D_\au(\V,\W) \eqdef \neg \AInv_\au(\V,W)$
and in the other direction by defining $\AInv_\au(\V,W) \eqdef \neg \D_\au(\V,\W)$. Altogether, we conclude that the transformation from
\Cref{transformation:formulas:arbiter} to \Cref{transformation:formulas:doomed} preserves (un)satisfiability, and also allows for bidirectional translation of solutions between the original formulas and the transformed system of CHCs:


\begin{theorem}\label{thm:trans}
The set of formulas in \Cref{transformation:formulas:arbiter} is equi-satisfiable to the system of CHCs in \Cref{transformation:formulas:doomed}.
Furthermore, there is an efficient translation of models of the former to models of the latter,
and vice versa.
\end{theorem}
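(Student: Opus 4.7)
The plan is to carry out the chain of transformations exactly as sketched in the body of \Cref{transformation}, verifying satisfiability preservation and bi-directional model translations at each step, and then composing them to obtain the stated result.

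First, I would show that \Cref{transformation:formulas:arbiter} is equi-satisfiable to \Cref{transformation:formulas-ainv}. The forward direction is obtained by setting $\AInv_\au(\V,\W) \eqdef \Inv(\V)\land A_\au(\V,\W)$: substituting $\Inv(\V)$ by $\Inv(\V) \land \bigvee_\au A_\au(\V,\W)$ (justified by the last formula of \Cref{transformation:formulas:arbiter}) and distributing $\land$ over $\lor$ on both sides of the implications produces formulas that, after the renaming, are exactly those of \Cref{transformation:formulas-ainv}; the disjunction on the left of the bad-state formula splits into $|\U|$ clauses, one per $\au$. The reverse direction, which is the main subtlety, uses $A_\au(\V,\W) \eqdef \AInv_\au(\V,\W)$ and $\Inv(\V) \eqdef \forall \W\cdot\bigvee_{\au\in\U}\AInv_\au(\V,\W)$. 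The universal quantifier over $\W$ here is important: since $\W$ appears free in \Cref{transformation:formulas-ainv} (and is thus implicitly universally quantified at the outermost level of each formula), defining $\Inv$ this way is sound, because whenever $\Inv(\V)$ is used on the left of an implication, the conjoined $A_\au(\V,\W) = \AInv_\au(\V,\W)$ instantiates $\W$ appropriately, while on the right of the consecution formula we again match $\W'$ with the $\W'$ bound by $\AInv_\au(\V',\W')$. Verifying that the five formulas of \Cref{transformation:formulas:arbiter} are satisfied by these definitions is then a direct check, where the last (non-Horn) formula holds by construction of $\Inv$.

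Second, I would establish equi-satisfiability between \Cref{transformation:formulas-ainv} and \Cref{transformation:formulas:doomed}. Observing that each formula in \Cref{transformation:formulas-ainv} has at most one negative occurrence of an unknown predicate (``co-Horn''), I would contrapose each implication and apply the renaming $\D_\au(\V,\W) \eqdef \neg \AInv_\au(\V,\W)$. Syntactically, this turns $\bigvee_\au \AInv_\au$ on the right into $\bigwedge_\au \D_\au$ on the left, and moves the constraint atoms (which are already over $\Sigma$, hence closed under negation in the sense of the logic) to the appropriate side, yielding exactly the CHCs of \Cref{transformation:formulas:doomed}. Model translation in both directions is just pointwise negation.

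Finally, composing the two equivalences and their model translations gives the theorem; the translations are efficient since they are syntactic definitions of one predicate in terms of another (with one quantifier over $\W$ in the $\Inv$ definition). The main obstacle I expect is the careful treatment of the quantifier on $\W$ in the $\Inv$ definition in the reverse direction of the first step: one must confirm that because $\W$ is implicitly universally quantified in every formula of \Cref{transformation:formulas-ainv}, the quantified definition of $\Inv$ correctly validates every use of $\Inv$ on the left of an implication in \Cref{transformation:formulas:arbiter}, and does not spuriously restrict the set of states where $\Inv$ holds. Once this is verified, every other step is purely syntactic and straightforward.
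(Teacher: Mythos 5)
Your proposal is correct and follows essentially the same route as the paper: the same stepwise transformation through the intermediate system of \Cref{transformation:formulas-ainv} (substituting $\Inv$ using the last formula, introducing $\AInv_\au \eqdef \Inv \land A_\au$ with the reverse translation $\Inv \eqdef \forall \W\cdot\bigvee_\au \AInv_\au$, then contraposing the co-Horn system via $\D_\au \eqdef \lnot\AInv_\au$), composed to yield the equi-satisfiability and the bi-directional model translations. Your extra attention to the universal quantification over $\W$ in the reverse translation of $\Inv$ is exactly the subtle point the paper's construction relies on.
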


\begin{proof}
We have already shown the CHCs in \Cref{transformation:formulas:doomed} are obtained from the formulas in \Cref{transformation:formulas:arbiter} by a stepwise transformation, where each step preserves equi-satisfiability and models.
We obtain the final translations between models by composing the aforementioned transformations of models of \Cref{transformation:formulas:arbiter} to models of \Cref{transformation:formulas-ainv} and vice versa, and the transformations of models of \Cref{transformation:formulas-ainv} to models of \Cref{transformation:formulas:doomed} and vice versa.
The final translations, which we have verified with Z3, are:

\smallskip
$
\renewcommand{\arraystretch}{1.2}
\begin{array}{l@{\quad}|@{\quad}l}
\mbox{Given~}\Inv,A_\au\models\mbox{[Fig. \ref{transformation:formulas:arbiter}]}
&
\mbox{Given~}D_\au\models\mbox{[Fig. \ref{transformation:formulas:doomed}]} \\ \hline \\[-11pt]
D_\au(\V,\W) \eqdef \neg (\Inv(\V)\land A_\au(\V,\W))
&
\Inv(\V) \eqdef \forall \W\cdot\bigvee_{\au\in\U} \neg D_\au(\V,\W) \\
&
A_\au(\V,\W) \eqdef \neg D_\au(\V,\W)
\end{array}
$
\end{proof}


\SH{try to improve refs to CHCs numbers}%
After the transformation, $\D_\au$ in \Cref{transformation:formulas:doomed} can be understood as capturing states that are \emph{doomed} if $\au$ is chosen, in the sense that from these states the arbiter will not be able to avoid reaching bad states
(hence the choice of notation).
This is captured by CHCs 2 and 4. CHCs 3 make sure that a choice that is not allowed is also doomed. CHC 1 requires that the states that are doomed for every choice $\au$ are disjoint from $\alpha$, which ensures that, starting from $\alpha$, the arbiter always has a way to stay in the ``safe zone'' and avoid bad states.

\begin{example}
Using this schema, and from
the formulas in \Cref{intro:running-example:arbiter},
we obtain the CHCs that were already presented in~%
\Cref{safety:running-example:doomed}.

\end{example}

\section{Encoding $k$-Safety Verification as CHCs}
\label{safety}

In this section we address the problem of verifying $k$-safety properties via a CHC encoding. To this end, we start with a natural, non-Horn, encoding of the problem, 
which is a slight generalization of the one described in \Cref{sec:motivating}, based on \cite{DBLP:conf/cav/ShemerGSV19},
and apply our transformation to obtain an equi-satisfiable system of CHCs.
\SH{added:}We note that the $k$-safety case, being more specific, can also be solved via the general $\forall^*\exists^*$ technique presented in \Cref{sec:beyond} (see also \Cref{rem:safety-as-aehyper}); but to make the presentation easier to follow, we describe the simpler case first and then generalize.

Throughout the section we fix a $k$-safety formula: 
\[
\varphi ~=~ \psi \rightarrow \forall \pi_1 : \xi_1, \ldots, \pi_k : \xi_k \cdot~ \square\phi
\]
%
This formula holds in a transition system $\TS$ if, starting from initial composed states that satisfy the pre-condition $\psi$, the observable states along every tuple of $k$ traces
satisfy $\phi$, when the observable states are reached synchronously.
\SH{remove since moved to background: Note that a pre-post specification, as used in our motivating example, is a special case of such a formula where the observable states are the final states.}%
Verifying \iflong such a property
\else $\varphi$
\fi corresponds to finding (1)~an alignment (``self composition'') of the traces that synchronizes the observation points defined by $\xi_1,\ldots,\xi_k$,
and (2)~an inductive invariant that establishes that $\phi$ holds whenever $\xi_1,\ldots,\xi_k$ hold.
Note that, while the purpose of the inductive invariant is to ensure that $\phi$ holds at the observation points, the way to guarantee this property is to require that the invariant is
inductive in all states along the aligned traces, including intermediate states between observable points (this is similar to the way  inductive invariants are used in proofs of partial correctness)\yv{This is a subtle point. We need to discuss it a little bit more and maybe elaborate. In essence, we may want to put more emphasis on the fact that inductiveness is not just between synchornization points}\SH{elaborated more - is it better?}.
\iflong
Different alignments may give rise to different inductive invariants. Thus, it is desirable to find an alignment and an inductive invariant simultaneously~\cite{DBLP:conf/cav/ShemerGSV19}.
\else
As different alignments  give rise to different inductive invariants,
it is desirable to find both of them
simultaneously~\cite{DBLP:conf/cav/ShemerGSV19}.
\fi

\SH{added}In the rest of the section, we first formalize the verification of $k$-safety properties via self composition based on the definitions of~\cite{DBLP:conf/cav/ShemerGSV19}.
We then encode the problem in first-order logic, and apply our transformation to obtain a CHC encoding.

\subsection{$k$-Safety Verification via Self Composition}

In this section we present the reduction of $k$-safety verification to safety verification based on self-composition.
To this end, we fix a transition system $\TS$ and its semantic counterpart $\TSsem= (\States, \Lbl,\InitSem, \TrSem)$ and use it throughout this subsection.
Following~\cite{DBLP:conf/cav/ShemerGSV19}, we model the alignment (or self composition) of $k$ traces of $\TS$ using an arbiter $\arbiter$ that, at every step, schedules a subset $\varnothing \neq M \subseteq \{1,\ldots,k\}$ of the traces to make a step based on the composed state  $\overline{s}=(s_1,\ldots, s_k) \in \States^k$, which consists of the current state of each trace.
Formally:
\begin{definition}[Arbiter]\label{def:arbiter}
Let $\schedules =\mathcal{P}(\{1,\ldots,k\})\setminus \{\varnothing\}$ denote the set of possible schedules for $k$ traces.
An \emph{arbiter}\footnote{Technically, the notion of an arbiter is parameterized by $k$---the number of traces considered. In our case, $k$ is fixed (since $\varphi$ is fixed), hence we simplify the definition.} for $\TS$ is a function $\arbiter : \States^k \to \mathcal{P}(\schedules)$.
\end{definition}
That is, $\schedules$ is the set of all schedules $\varnothing \neq M \subseteq \{1,\ldots,k\}$ and an arbiter
maps each composed state $\overline{s}$ to a set of possible schedules $\arbiter(\overline{s}) \subseteq \schedules$.
(Later on, when we discuss valid arbiters, we impose additional restrictions on the possible schedules.)
%

An arbiter $\arbiter$ induces a composed transition system $\TSsem^k_{\arbiter}$ over the set of composed states, where the initial states are initial composed states (i.e., $k$-tuples of initial states of $\TS$),
and the outgoing transitions of a composed state $\overline{s}$ correspond to a parallel execution of the outgoing transitions of the individual states according to the schedule $M$ determined by $\arbiter$ for $\overline{s}$.
%
Formally, $\TSsem^k_{\arbiter}$ is defined as follows:

\begin{definition}[Composed Transition System]\label{def:composed}
Given composed states $\overline{s}, \overline{s}' \in \States^k$, labels $\overline{\ell} \in \Lbl^k$ and a schedule $\varnothing \neq M \subseteq \{1,\ldots,k\}$,
we write $\overline{s}\overset{M,\overline{\ell}}{\rightsquigarrow}\overline{s}'$
to indicate that $\overline{s}'$ is obtained from $\overline{s}$ by taking the transition with label $\ell_i$ from $s_i$ whenever $i \in M$, and stuttering otherwise.
Formally,
\[\overline{s}\overset{M,\overline{\ell}}{\rightsquigarrow}\overline{s}' \iff
 \bigwedge_{i\in M} \TrSem(s_i,\ell_i,s'_i) \land
 \bigwedge_{i\not\in M} s_i = s'_i.\]
Given an arbiter $\arbiter$,
the (semantic, labeled) \emph{composed transition system} according to $\arbiter$ is then
\[
\TSsem^k_\arbiter  =  (\States^k, \Lbl^k, \InitSem^k, \TrSem^k_\arbiter) \qquad \text{ where } \qquad  
\TrSem^k_\arbiter =  \{(\overline{s},\overline{\ell}, \overline{s}') \mid \overline{s} \overset{M,\overline{\ell}}{\rightsquigarrow} \overline{s}' \text{ for } M  \in \arbiter(\overline{s})\}
\]
We refer to $\overline{s}\overset{M,\overline{\ell}}{\rightsquigarrow}\overline{s}'$ as a transition of the composed system with schedule $M$ and label $\overline{\ell}$.
\end{definition}

Each trace of $\TSsem^k_\arbiter$ captures $k$ traces of $\TS$ that are aligned according to $\arbiter$.
As such, considering the composed transition system enables the reduction of $k$-safety verification in $\TS$
to regular safety verification over $\TSsem^k_\arbiter$. The idea is to verify that, for every composed state that is reachable from $\psi$ in $\TSsem^k_\arbiter$, if all individual states are at their observation points, then the composed state satisfies $\phi$.
That is, every such reachable composed state satisfies $(\bigwedge_{i=1}^k \xi_i) \to \phi$.

For this reduction to be sound, the arbiter must ensure that every tuple of $k$ traces that can synchronize on the observation points is considered with some alignment in which all $k$ traces reach the observation points at the same time.
To this end, the arbiter must choose at least one schedule $M$ for each composed state (preventing traces from being ``truncated''). 
Furthermore, the arbiter must respect the synchronization of the observation points: when a trace reaches an observation point, it must wait for all the other traces to reach theirs, before it is allowed to proceed.
This motivates the following definitions.
\begin{definition}[Valid Schedules]
\label{valid-schedules}
$M\in \schedules$ is a \emph{valid schedule} for a composed state $\overline{s} = (s_1,\ldots, s_k)$
if either of the following two conditions holds:
\newcommand\Obs{\mathit{Obs}}

\vspace{2pt}
\begin{inparaenum}
    \item  $\forall i\in M\cdot s_i\not\models\xi_i$\label{lbl:case_1}, or  \qquad
    \item $\forall i\in M\cdot s_i\models\xi_i$ and $M=\{1,...,k\}$.\label{lbl:case_2}
\end{inparaenum}

\vspace{2pt}
\noindent  We denote the set of valid schedules for $\overline{s}$ by $\validS(\overline{s})$.
\end{definition}


For composed traces that are constructed from valid schedules, \Cref{valid-schedules} ensures that
all individual traces reach their observation points before any of them can progress past them (this is enforced by case (\ref{lbl:case_1}), where $M$ is only allowed to include traces that are not in their observations points); and when they do,
they do it simultaneously (this is handled by case (\ref{lbl:case_2}), where all traces are at their observation points, and all are included in the schedule).\footnote{The requirement that all traces leave the observation point in tandem saves the need to record which of them already made a step since the last observation point.}
In other words, the observation points act as a ``barrier'' that all individual traces synchronize on.
To ensure that all composed traces in $\TSsem^k_\arbiter$ are constructed in this way, we then require that a valid arbiter assigns only valid schedules to every reachable composed state:

\begin{definition}[Valid Arbiters]
We say that an arbiter $\arbiter$ is \emph{valid} for $\TS$ and $\varphi$ if for every composed state $\overline{s} \in \States^k$
that is reachable in $\TSsem^k_\arbiter$ from some initial state that satisfies $\psi$, it holds that
$\varnothing\neq \arbiter(\overline{s})
\subseteq \validS(\overline{s})$.
\end{definition}

Similarly to the result shown in~\cite{DBLP:conf/cav/ShemerGSV19} for relational pre/post specifications, if the arbiter is valid, then $\TS \models \varphi$ if and only if the composed transition system $\TSsem^k_\arbiter$ induced by the arbiter satisfies the (regular) safety property ``globally $(\bigwedge_{i=1}^k \xi_i) \to \phi$'', when the initial states of $\TSsem^k_\arbiter$ are restricted to $\{\overline{s} \in \InitSem^k \mid \overline{s} \models \psi\}$.\footnote{Note that for the correctness of the reduction, $\TSsem^k_\arbiter$ need not have a symbolic representation.
In particular, it may be the case that $\arbiter$ is not definable in first-order logic.
} The latter can be established by finding an inductive invariant, namely, 
a set $\InvSem$ of composed states  that includes the initial composed states, is closed under steps of $\TSsem^k_\arbiter$, and is included in the ``safe'' states.
Altogether, we conclude that $\TS \models \varphi$ if and only if there exists a valid arbiter and
a set of composed states $\InvSem$ such that $\InvSem$ is an inductive invariant for the induced composed transition system $\TSsem^k_\arbiter$ that establishes its safety w.r.t.\
$(\bigwedge_{i=1}^k \xi_i) \to \phi$.

\subsection{FOL Encoding of Self Composition}\label{sec:fol-k-safety}
Next, given a transition system $\TS= (\Vone, \Lblone, \Init, \Tr)$, we encode the problem of determining the existence of a valid arbiter and a corresponding inductive invariant as a satisfiability problem in first-order logic.
To reason about composed states, we define a vocabulary $\V  = V_1 \cup \cdots \cup V_k$ that consists
of the set of state variables of all traces.
We encode the arbiter using a family of unknown predicates $\{A_M(\V)\}_{M \in \schedules}$, one for every $\varnothing \neq M \subseteq \{1,\ldots,k\}$,
and encode the inductive invariant using an unknown predicate $\Inv(\V)$.
We express the case where all traces reached an observable state but $\phi$ does not hold using the constraint:
\[\Bad(\V) \eqdef \bigwedge_i \mathit{\xi_i}(V_i) \wedge \neg \phi(\V).\]
We denote the label variables of all the traces by $\Lvoc \eqdef \langle\Lblone_1,\ldots,\Lblone_k\rangle$.
The joint steps of the traces as determined by the schedule $M$ are given by the following constraint:
\[
\begin{array}{rll}
 \Delta_{M}(\V, \V',\Lvoc) 
 & \eqdef &
\bigwedge_{i\in M} \Tr(V_i,a_i,V_i') \wedge \bigwedge_{i\not\in M} V_i = V_i' \\
\delta_{M}(\V, \V') &\eqdef&
\exists \Lvoc
\cdot \Delta_{M}(\V, \V',  
\Lvoc)
\end{array}
\]

$\Delta_{M}$ represents $\overline{s}\overset{M,\overline{\ell}}{\rightsquigarrow}\overline{s}'$ (\cref{def:composed}) in the sense that $\overline{s},\overline{s}',\overline{\ell} \models \Delta_{M}$ if and only if $\overline{s}\overset{M,\overline{\ell}}{\rightsquigarrow}\overline{s}'$.
$\delta_{M}$ existentially quantifies over
the label variables in $\Delta_{M}$,%
\footnote{Since, in our FOL schema, $\delta_M$ appears on the left-hand side of an implication, existential quantifiers can be pushed outside as universal quantifiers, resulting in quantifier-free bodies.} indicating that any labeled transition can be used to make a step.
(The labels will become important in the next section.)
\iflong

\else \fi
The definition of a valid schedule is captured by:
\begin{equation*} 
\valid_M(\V) ~\eqdef~
\begin{cases}
\bigwedge_{i \in M} \neg \xi_i(V_i) & M \neq \{1,\ldots,k\}\\
\big(\bigwedge_{i \in M} \neg \xi_i(V_i)\big) \vee \big(\bigwedge_{i \in M}  \xi_i(V_i)\big) & M = \{1,\ldots,k\}\\
\end{cases}
\end{equation*}
This definition ensures that $M \in \validS(\overline{s})$ (see \Cref{valid-schedules}) if and only if $\overline{s} \models \valid_M(\V)$.

\begin{figure}[t]
\resizebox{1\textwidth}{!}{
$
\renewcommand{\arraystretch}{1.2}
\begin{array}{l|l}
\subfloat[]{$
\begin{array}{lr@{}l}
    & \bigwedge_i \Init(V_i) \land \psi(\V) \rightarrow{} & \Inv(\V)  \\
    & \Inv(\V) \land \Bad(\V) \rightarrow{} & \bot \\
    \eqForall & \Inv(\V) \wedge A_{M}(\V) \land \neg \valid_M(\V) \rightarrow{} & \bot \\
    \eqForall & \Inv(\V)\land A_{M}(\V) \land \delta_M(\V, \V')
 \rightarrow{} & \Inv(\V') \\
    & \Inv(\V) \rightarrow{} & \displaystyle \bigvee_{M} A_{M}(\V)
\end{array}
$
\label{safety:formulas:arbiter}
}
&
\subfloat[]{$
\begin{array}{lr@{}l}
    & \displaystyle\bigwedge_M D_M(\V) \wedge \bigwedge_i \Init(V_i) \wedge \psi(\V) \rightarrow{} & \bot  \\
    \eqForall &  \Bad(\V) \rightarrow{} & D_M(\V) \\
    \eqForall & \neg \valid_M(\V) \rightarrow{} & D_M(\V) \\
    \eqForall & \displaystyle \bigwedge_{M'} D_{M'}(\V') \land \delta_M(\V, \V')
 \rightarrow{} & D_M(\V)
\end{array}
$
\label{safety:formulas:doomed}
}
\end{array}
$
}

\vspace{-1.3em}
\raggedright\scalebox{0.8}{$\big(\,\eqForall = \forall M\in\schedules\big)$}\hspace{3em}
\caption{$k$-safety verification scheme before (a) and after (b) the transformation.}
\label{safety:formulas}
\end{figure}

\Cref{safety:formulas:arbiter}
formalizes the joint requirements of the arbiter and the inductive invariant that ensure that $\varphi$ holds in $\TS$.
The formulas in lines 1,2 and 4 ensure that $\Inv$ is an inductive invariant that establishes safety of $\TSsem^k_\arbiter$ w.r.t. the regular safety property $\neg \Bad \equiv (\bigwedge_{i=1}^k \xi_i) \to \phi$, the formulas in line 3 ensure that $\arbiter$ is a valid arbiter and the formula in line 5 ensures that $\arbiter$ determines at least one schedule for every reachable state.
The following theorem summarizes the soundness of the encoding, which is a slight generalization of the encoding in~\cite{DBLP:conf/cav/ShemerGSV19} (where only pre/post specifications are considered):
\begin{theorem}\label{thm:k-safety}
The set of formulas in \Cref{safety:formulas:arbiter} is satisfiable iff $\TS \models \varphi$.
\end{theorem}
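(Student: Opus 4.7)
The plan is to prove the two directions separately by leveraging the semantic characterization established at the end of Section 4.1, namely that $\TS \models \varphi$ iff there exists a valid arbiter $\arbiter$ and a set of composed states $\InvSem$ that is an inductive invariant of $\TSsem^k_\arbiter$ establishing its safety w.r.t.\ $\neg \Bad$. Concretely, I would construct a bi-directional correspondence between semantic objects (valid arbiter, inductive invariant) and first-order interpretations of the unknown predicates $\{A_M\}_M$ and $\Inv$ that witness satisfiability of \Cref{safety:formulas:arbiter}, and then verify clause-by-clause.

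For the ($\Rightarrow$) direction, assume that $\Inv$ and $\{A_M\}_M$ satisfy all five formulas in \Cref{safety:formulas:arbiter}. Define the semantic arbiter by $\arbiter(\overline{s}) = \{M \in \schedules \mid \overline{s} \models A_M(\V)\}$, and let $\InvSem$ be the set of composed states satisfying $\Inv$. The first formula gives that every initial composed state that satisfies $\psi$ lies in $\InvSem$; the second formula gives $\InvSem \cap \Bad = \varnothing$; the fourth formulas yield that $\InvSem$ is closed under every transition of $\TSsem^k_\arbiter$ (since any transition of the composed system is produced by some $M \in \arbiter(\overline{s})$, i.e., $\overline{s} \models A_M$, and by $\delta_M$); the fifth formula ensures $\arbiter(\overline{s}) \ne \varnothing$ for every $\overline{s} \in \InvSem$; and the third set of formulas forces each chosen $M$ to satisfy $\valid_M$, hence $\arbiter(\overline{s}) \subseteq \validS(\overline{s})$. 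Together, $\arbiter$ is valid on the reachable subset of $\InvSem$, and $\InvSem$ is an inductive invariant for $\TSsem^k_\arbiter$ separating the initial $\psi$-states from $\Bad$, so by Section 4.1 we obtain $\TS \models \varphi$.

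For the ($\Leftarrow$) direction, assume $\TS \models \varphi$. By the semantic characterization, there is a valid arbiter $\arbiter$ and an inductive invariant $\InvSem \subseteq \States^k$ that is closed under $\TSsem^k_\arbiter$, contains all initial $\psi$-composed states, and is disjoint from $\Bad$. W.l.o.g.\ we may restrict $\InvSem$ to the reachable set in $\TSsem^k_\arbiter$ so that $\arbiter$ is both non-empty and valid on every state of $\InvSem$. Interpret $\Inv(\V)$ as the characteristic relation of $\InvSem$, and $A_M(\V)$ as the relation $\{\overline{s} \mid M \in \arbiter(\overline{s})\}$. A direct check shows each of the five formulas of \Cref{safety:formulas:arbiter} holds under this interpretation, using the correspondence $\delta_M(\overline{s},\overline{s}') \Leftrightarrow \exists \overline{\ell}\cdot\overline{s}\overset{M,\overline{\ell}}{\rightsquigarrow}\overline{s}'$ and the fact that $\valid_M(\overline{s})$ captures membership in $\validS(\overline{s})$.

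The main obstacle I anticipate is being careful about where precisely each requirement is enforced. In particular, the third set of formulas demands that \emph{every} state in $\Inv$ with $A_M$ holding is valid, whereas the semantic definition of a valid arbiter only constrains \emph{reachable} composed states; this is the reason I restrict $\InvSem$ to the reachable subset in the ($\Leftarrow$) direction. A second subtlety is that satisfiability is interpreted set-theoretically (not requiring first-order definable witnesses), so completeness goes through without a definability assumption. Beyond these points, the argument is a routine unfolding of the definitions of $\TSsem^k_\arbiter$, $\validS$, and $\Bad$ against the formulas of \Cref{safety:formulas:arbiter}.
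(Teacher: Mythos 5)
Your proposal is correct and follows essentially the same route as the paper: the paper's own proof is only a sketch that defers to the analogous correspondence argument of Shemer et al.\ (noting the generalized notions of valid schedules and bad states), and what you write out — the bidirectional translation between models of the formulas and pairs of a valid arbiter plus inductive invariant for $\TSsem^k_\arbiter$, using the semantic characterization at the end of Sec.~4.1 — is exactly that argument made explicit. Your two flagged subtleties (restricting $\InvSem$ to the reachable states so that validity/non-emptiness, which semantically constrain only reachable states, hold everywhere on $\Inv$; and satisfiability not requiring definable interpretations) are the right ones and are handled correctly.
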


\begin{proof}[Proof sketch]
The proof is similar to the proof of~\cite{DBLP:conf/cav/ShemerGSV19}. The differences are in (1)~the notion of valid schedules, which ensures that no tuple of $k$ traces that can reach an observation point simultaneously is overlooked, much like the notion of fairness from~\cite{DBLP:conf/cav/ShemerGSV19} ensures that no tuple of terminating $k$ traces is overlooked, and (2)~the definition of bad states as states that satisfy $\bigwedge_i \mathit{\xi_i}(V_i) \wedge \neg \phi(\V)$, which generalizes their definition in~\cite{DBLP:conf/cav/ShemerGSV19} as terminating states that falsify the post condition.
\end{proof}

\begin{example}
Applying the scheme of \Cref{safety:formulas:arbiter} to the program and  \aehyper specification  of the $2$-safety property from \Cref{intro:running-example} results in \Cref{intro:running-example:arbiter}, 
\SH{BUG fix:}except for moving constraints to the right hand side of the implication when it assists readability.
Note that in this example, the observation points $\xi_i$ of both traces correspond to the condition for exiting the loop (which is the negated loop condition). As a result $\valid_{\{i\}} \eqdef a_i < b_i$ for $i \in \{1,2\}$ and $\valid_{\{1,2\}} \eqdef (a_1 < b_1 \wedge a_2 < b_2) \lor (\neg(a_1 < b_1) \land \neg (a_2 < b_2))$. 
\end{example}


\subsection{From FOL to CHC Encoding of Self Composition}
The set of formulas in \Cref{safety:formulas:arbiter} fits the general scheme of \Cref{transformation:formulas:arbiter};
Thus, it is amenable to our general satisfiability-preserving transformation, resulting in
the CHCs in \Cref{safety:formulas:doomed}.
%
%
%
%
Since the transformation is satisfiability preserving, we obtain the following as a corollary of \Cref{thm:k-safety,thm:trans}:
\begin{corollary}
The system of CHCs in
{\upshape\Cref{safety:formulas:doomed}}
is satisfiable iff $\TS \models \varphi$.
\end{corollary}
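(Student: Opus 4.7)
The plan is to derive the corollary by composing the two theorems already established in the excerpt: \Cref{thm:k-safety}, which states that \Cref{safety:formulas:arbiter} is satisfiable iff $\TS \models \varphi$, and \Cref{thm:trans}, which yields equi-satisfiability between the general FOL scheme of \Cref{transformation:formulas:arbiter} and the CHC scheme of \Cref{transformation:formulas:doomed}. The corollary is then obtained by transitivity. There is no new mathematical content; the only thing to verify is that the $k$-safety encoding is a legitimate instance of the general scheme, so that \Cref{thm:trans} may be invoked.

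First, I would match the $k$-safety encoding of \Cref{safety:formulas:arbiter} against the scheme of \Cref{transformation:formulas:arbiter}. Take the composed vocabulary $\V = V_1 \cup \cdots \cup V_k$ and let the vocabulary $\W$ (and the primed $\W'$) be empty—the $k$-safety arbiter predicates $A_M$ and the invariant $\Inv$ are defined over $\V$ alone. Instantiate the index set as $\U = \schedules = \mathcal{P}(\{1,\dots,k\}) \setminus \{\varnothing\}$. Map the constraint holes as follows: $\alpha(\V) := \bigwedge_i \Init(V_i) \land \psi(\V)$, $\beta(\V) := \Bad(\V)$, $\gamma_M(\V) := \neg\,\valid_M(\V)$, and $\delta_M(\V,\V') := \exists \Lvoc \cdot \Delta_M(\V,\V',\Lvoc)$. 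Each of these is a first-order constraint over $\Sigma$—in particular none of them mentions the unknown predicates $\Inv$ or $\{A_M\}_M$—so the scheme's syntactic requirements are met. Under this substitution, the five families of formulas in \Cref{safety:formulas:arbiter} coincide with the five lines of \Cref{transformation:formulas:arbiter}, and the four families of CHCs in \Cref{safety:formulas:doomed} coincide with the four lines of \Cref{transformation:formulas:doomed}.

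Having confirmed the instantiation, \Cref{thm:trans} delivers the equi-satisfiability of \Cref{safety:formulas:arbiter} and \Cref{safety:formulas:doomed}. Combining this with \Cref{thm:k-safety} gives the chain
\[
\TS \models \varphi
\;\Longleftrightarrow\;
\text{\Cref{safety:formulas:arbiter} is satisfiable}
\;\Longleftrightarrow\;
\text{\Cref{safety:formulas:doomed} is satisfiable},
\]
which is exactly the statement of the corollary. The only potential obstacle is the bookkeeping of the syntactic matching—specifically checking that the existentially quantified label variables inside $\delta_M$ do not spoil the ``constraint'' status demanded by the scheme. This is harmless: $\delta_M$ appears only on the left-hand side of implications in both \Cref{safety:formulas:arbiter} and \Cref{safety:formulas:doomed}, so the existential quantifiers may be prenexed out as universals over the CHC, preserving the clausal form and the absence of unknown predicates inside the constraint.
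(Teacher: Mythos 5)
Your proposal is correct and matches the paper's own argument: the paper likewise observes that \Cref{safety:formulas:arbiter} instantiates the scheme of \Cref{transformation:formulas:arbiter} (with $\W$ empty, $\U = \schedules$, and the constraint holes filled exactly as you describe) and then obtains the corollary by combining \Cref{thm:trans} with \Cref{thm:k-safety}. Your remark about prenexing the existential label quantifiers in $\delta_M$ out as universals is also the paper's resolution (stated there in a footnote), so nothing is missing.
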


Where $A_M(\V)$ in \Cref{safety:formulas:arbiter} describes the states where choosing schedule $M$ leads to successful verification
with $\Inv$ as an inductive invariant, $\D_M(\V)$ in \Cref{safety:formulas:doomed} can be understood as describing states where choosing $M$ would \emph{prevent} the verification from going through in the sense that no inductive invariant would exist.
In other words, these states are ``doomed'' if $M$ is chosen, hence the choice of notation.
If the set of CHCs in \Cref{safety:formulas:doomed} is satisfiable,
it proves that initial states that satisfy the pre-condition are not doomed. This intuition
can be interpreted in a dual manner: if the initial states are not doomed, then there
exists an alignment for which a safe inductive invariant exist.

\section{Encoding $\forall^*\exists^*$ Hyperproperties as CHCs}
\label{sec:beyond}

\begin{figure}[t]
\lstset{basicstyle=\ttfamily\scriptsize}
\begin{tabular}{@{\hspace{1cm}}l@{\hspace{1cm}}l}
\begin{lstlisting}[numbers=left]
sum = 0;
b = *;
if (b > 0) {
  i = 0;
  while (i < n - 1) {
    sum = sum + A[i];
    i++;
  }
}
\end{lstlisting}
&
\begin{lstlisting}[numbers=left,firstnumber=10]
else {
  i = 1;
  while (i < n) {
    y = *;     
    sum = sum + A[i] + y;
    i++;
  }
}
\end{lstlisting}
\end{tabular}

\vspace{2pt}
$(A_1 = A_2 \land n_1 = n_2) \rightarrow \forall\pi_1:\pc=5~\exists\pi_2:\pc=5\lor\pc=12\cdot
    \square(b_2 \leq 0 \land \mathit{sum}_1=\mathit{sum}_2)$

\vspace{-0.25em}
\caption{Example for a $\forall\exists$ hyperproperty.}
\label{beyond:example}
\end{figure}

In this section we consider the more general case of \aehyper specifications. 
Throughout the section, $\TS$ is a transition system, and we fix a formula:
%
\[\varphi ~=~ \psi \rightarrow \forall \pi_1 : \xi_1, \ldots, \pi_l : \xi_l\cdot \exists \pi_{l+1} : \xi_{l+1},\ldots,\pi_k:\xi_k\cdot~ \square\phi
\]

In order to encode the problem of deciding if $\TS\models\varphi$ as
a satisfiability problem, we follow \cite{DBLP:conf/cav/BeutnerF22}, and consider a \emph{game semantics}, which
is natural due to the alternation of quantifiers. 
The $\forall$ quantifiers are ``demonic'' and hence controlled by the falsifier, and the $\exists$ quantifiers are ``angelic'', thus controlled by the verifier.


In the following, we introduce the  game semantics  of~\cite{DBLP:conf/cav/BeutnerF22} for \aehyper. We then 
encode truth of $\varphi$ in $\TS$ under the game semantics as a satisfiability problem, and use the transformation from \Cref{transformation} 
to obtain a system of CHCs that is satisfiable iff $\TS$ satisfies $\varphi$ according to the game semantics.


%

\subsection{Game Semantics for \aehyper}

The game semantics for \aehyper proposed by~\cite{DBLP:conf/cav/BeutnerF22} introduces a safety game between a verifier and a falsifier. 
Truth of $\varphi$ in $\TS$ under the game semantics is captured by the verifier having a winning strategy in the game for $\TS$ and $\varphi$.


\begin{example}
\label{beyond:example:desc}
To illustrate the game semantics, 
we use the example in \Cref{beyond:example}, which accompanies this section.
The presented program computes the sum of an array slice, nondeterministically choosing between the slice $A[0..n-2]$ and $A[1..n-1]$.
For the second case, an arbitrary integer can be added to each summand.
This allows the program to fulfill the specification at the bottom,
which requires that for every execution there is a corresponding execution of the second case ($b_2 \leq 0$) such that the sums at lines 5 and 12 align at every iteration.
The specification is valid because $y$ at line 13 can always be chosen to compensate for the deviation between $A[i_1]$ and $A[i_2]$. 

Considering the game semantics, the falsifier chooses $b_1$, and the verifier must choose $b_2\leq 0$ to satisfy the specification,
a scheduling that will align $\pc_1=5$ and $\pc_2=12$
at every iteration, and 
a value for $y$
such that after both assignments (lines 6 and 14) $\mathit{sum}_1=\mathit{sum}_2$ is satisfied.
Setting $y=A[i_1] - A[i_2]$ achieves this objective.
\end{example}



\paragraph{Safety games} 
\iflong A \emph{safety game} is a game
\else 
are 
\fi 
played between a \emph{verifier}, whose goal is to avoid \emph{bad} states, and a \emph{falsifier} who tries to reach a bad state. Formally, the game is a tuple $\game = (\vstates, \fstates, \istates, \vsteps, \fsteps, \bstates)$ where
$\vstates$ are \emph{verifier states}, in which the verifier moves,
and $\fstates$ are \emph{falsifier states}, in which the falsifier moves, and $\vstates \cap \fstates = \varnothing$.
The \emph{game states} are $\gstates = \vstates \cup \fstates$. $\istates \subseteq \gstates$ is a set of initial states, and $\bstates \subseteq \gstates$ is a set of bad states. $\vsteps \subseteq \vstates \times \gstates$ defines the possible moves of the verifier and $\fsteps \subseteq \fstates\times\gstates$---of the falsifier. It is assumed that $\vsteps,\fsteps$ are total 
\iflong
in the sense that there is always at least one move for each player in the corresponding states.
\else 
\ie,  there is at least one move for each player from every state.
\fi
A \emph{play} is a  sequence of game states
$\sigma_0,\sigma_1,\ldots$ such that $\sigma_0 \in \istates$,
and for every $i \geq 0$,  $(\sigma_i,\sigma_{i+1}) \in\vsteps  \cup \fsteps$.
The play is winning for the verifier if it is infinite and $\sigma_i \not \in \bstates$ for every $i \geq 0$.
A (memoryless) strategy for the verifier is a function $\strategy: \vstates \to \gstates$ 
such that $(\sigma,\strategy(\sigma))\in \vsteps$ for every $\sigma \in \vstates$. $\strategy$ is a \emph{winning strategy} for the verifier if all the plays in which the verifier moves  according to $\strategy$ are winning for the verifier.



\paragraph{Game semantics for $\forall^*\exists^*$-OHyperLTL}
%
%
Let $\varphi$ be as above.
The game that captures the semantics of $\varphi$ is defined with respect to a deterministic labeled transition system 
$\TS=(V, \Lblone ,\Init, \Tr)$.
Note that the assumption that $\TS$ is deterministic does not restrict generality, and, in particular, does not prevent treatment of nondeterministic programs, 
since we can always determinize $\TS$ by extending the set of labels
without affecting the semantics; this step may introduce infinitely many labels, 
which do not require any special treatment in the definition of the game, but whose CHC encoding will be addressed in \Cref{sec:beyond:infinite}.
From this point on, we assume that $\TS$ is deterministic.

%
The game for $\varphi$ and $\TS$ proceeds in rounds, where in each round the falsifier makes a move and the verifier responds.
The falsifier states 
are composed states (of $k$ traces), and the verifier states augment them with a record of the falsifier's last move. 
The bad states are falsifier states where all traces are in their observation points but $\phi$ does not hold.
The falsifier is responsible for choosing the transitions that define the $\forall$ traces
\iflong $t_1,\ldots,t_l$
\else $t_{1..l}$
\fi
assigned to 
\iflong $\pi_1,\ldots,\pi_l$. 
\else 
$\pi_{1..l}$.
\fi
The verifier responds by choosing the transitions of the $\exists$ traces 
\iflong $t_{l+1},\ldots,t_k$ 
\else $t_{l+1..k}$
\fi 
assigned to 
\iflong 
$\pi_{l+1},\ldots,\pi_k$.
\else 
$\pi_{l+1..k}$.
\fi
Here the labels of the transitions come into play: the players specify the transitions of choice by picking a label $\ell \in\Lbl$ for each trace \iflong $t_i$.
\else.
\fi 
(Since $\TS$ is deterministic, transitions are uniquely identified by labels.)
The traces 
\iflong $t_1,\ldots,t_k$ 
\fi 
then need to be aligned s.t.\ they synchronize on their observation points defined by $\xi_i$. 
As long as the alignment of the traces is valid, i.e., ensures that the traces reach their observation points simultaneously when possible, the alignment does not affect the winner of the play. 
That is, if a play is winning for the verifier with one (valid) alignment, it will also be winning with all others.
However, as in the case of $k$-safety, the alignment  is instrumental for obtaining a winning strategy that has a simple description. As a result, the choice of the (valid) alignment is also left to the verifier, and is defined using (valid) schedules.
Altogether, 
a move of the falsifier consists of picking labels $\ell_{1},\ldots,\ell_l\in\Lbl$ for the 
\iflong universally quantified 
\else $\forall$ 
\fi 
trace variables;
a move of the
verifier consists of picking a valid subset
$\varnothing\neq M\subseteq\{1,\ldots,k\}$
of the traces to progress (as in \Cref{safety}) and also labels $\ell_{l+1},\ldots,\ell_k\in\Lbl$ for the 
\iflong existentially quantified
\else $\exists$-%
\fi
trace variables, and proceeding to the resulting composed state.\iflong\footnote{In our definition of the game for $\TS$ and $\varphi$, each round consists of two steps as opposed to 3 in~\cite{DBLP:conf/cav/BeutnerF22}. This definition is more precise than the def of~\cite{DBLP:conf/cav/BeutnerF22} in the following sense: a winning strategy in the game of~\cite{DBLP:conf/cav/BeutnerF22} implies a winning strategy in our game.}
\else\footnote{In~\cite{DBLP:conf/cav/BeutnerF22}, steps of the verifier are split to two. Our definition is more precise in the sense that a winning strategy in the game of~\cite{DBLP:conf/cav/BeutnerF22} implies a winning strategy in our game.}
\fi
In this manner, the verifier iteratively ``reads off'' the labels chosen by the falsifier for $t_{1..l}$, and generates the labels for the traces $t_{l+1..k}$, properly aligned by its choice of $M$, while avoiding the bad states.
If the verifier can do so indefinitely, then this proves that $\varphi$ holds.
%
The formal definition of the components of the game follows.

\begin{definition}
Let $\TS$  be a deterministic transition system and $\TSsem = (\States, \Lbl,\InitSem, \TrSem)$ its semantic counterpart.
The verification game for $\TS$ and $\varphi$ is a safety game 
$\game_{\TS,\varphi} = (\vstates, \fstates, \istates, \vsteps, \fsteps, \bstates)$, where: 

\vspace{5pt}
$
\begin{array}{@{}l@{\hspace{0.25em}}l@{}}
\fstates & = \States^k \qquad
\vstates = \States^k \times \Lbl^l \qquad
\istates = \{\overline{s} \in \InitSem^k \mid \overline{s} \models \psi\}\\[.3em]
\bstates &= \{\overline{s} \in \fstates \mid \overline{s} \not \models \phi \text{ and }s_i \models \xi_i \text{ for every } 1 \leq i \leq k\}\\
\fsteps & =  \{(\overline{s}, (\overline{s},\overline{\ell}{}^\forall)) \mid \overline{s} \in \fstates, ~\overline{\ell}{}^\forall \in \Lbl^l\} \quad
\vsteps =
\{((\overline{s},\overline{\ell}{}^\forall), \overline{s}') \mid \overline{s} \overset{M,\overline{\ell}}{\rightsquigarrow} \overline{s}' \text{ for } M \in \validS(\overline{s}) \text{ and } \overline{\ell}{}^\exists
\in\Lbl^{k-l}
\}
\end{array}
$

\SH{should we write the steps of the players as a union of sets of steps, a set for each possible choice? (each set would be a singleton....) It can help explain the encoding later}In the above, $M$ represents a valid schedule according to \Cref{valid-schedules}
and $\overline{s}\overset{M,\overline{\ell}}{\rightsquigarrow}\overline{s}'$ denotes
a transition of the composed system from $\overline{s}$ to $\overline{s}'$
according to schedule $M$ and the labels in $\overline{\ell}$
 (see \Cref{def:composed}).
The labels $\overline{\ell} = \langle\ell_1,\ldots,\ell_k\rangle$ are split into $\overline{\ell}{}^\forall=\langle\ell_1,..,\ell_l\rangle$
and $\overline{\ell}{}^\exists=\langle\ell_{l+1},..,\ell_{k}\rangle$.

\end{definition}
\iflong 
For a falsifier step $(\overline{s}, (\overline{s},\overline{\ell}{}^\forall)) \in \fsteps$,
we refer to $\overline{\ell}{}^\forall$ as the choice of the falsifier (note that $\overline{s}$ does not change).
Similarly, for a verifier step $((\overline{s},\overline{\ell}{}^\forall), \overline{s}')\in \vsteps$,
we refer to $\langle M,\overline{\ell}{}^\exists \rangle$ such that $\overline{s} \overset{M,\overline{\ell}}{\rightsquigarrow} \overline{s}'$
as the choice of the verifier that leads to $\overline{s}'$.
\fi

\begin{example}
In the example of \Cref{beyond:example}, the labels of transitions are integer values 
that reflect the choice of \code{*}
at lines 2 and 13 (and have no effect on other states).
The verifier and falsifier specify their moves using these labels. 
For example, in order to ensure that 
$\mathit{sum}_1=\mathit{sum}_2$ is satisfied at every iteration, the verifier selects a transition label $\ell = A[i-1] - A[i]$ in line 13, which sets the value of $y$ accordingly;
after both assignments at lines 6 and 14, $\mathit{sum}_1=\mathit{sum}_2$ holds. 
\end{example}

The game semantics of \aehyper is based on the winner in the verification game:
\begin{definition}[Game Semantics for \aehyper~\cite{DBLP:conf/cav/BeutnerF22}]
Let $\TS$ be a transition system and $\varphi$ a \aehyper formula.
$\TS$ \emph{satisfies} $\varphi$ \emph{according to the game semantics}, denoted $\TS \modelsg\varphi$, if the verifier has a winning strategy in the verification game
$\game_{\TS,\varphi}$.
\end{definition}


As shown in~\cite{DBLP:conf/cav/BeutnerF22}, the game semantics is sound but incomplete. Incompleteness means that it is possible that $\TS \models \varphi$ but the verifier does not have a winning strategy in the safety game for $\varphi$ and $\TS$.
Soundness is summarized by the following theorem:


\begin{theorem}[
{\rm\cite{DBLP:conf/cav/BeutnerF22}}]
If $\TS \modelsg \varphi$ then $\TS \models \varphi$.
\end{theorem}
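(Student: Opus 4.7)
The plan is to fix a memoryless winning strategy $\strategy$ for the verifier in $\game_{\TS,\varphi}$ and use it to extract, from any choice of $\forall$-traces, matching $\exists$-traces and a valid alignment witnessing $\TS \models \varphi$ in the standard semantics. Concretely, I will take an arbitrary initial composed state $\overline{s}_0 = (s_1^0,\ldots,s_k^0) \in \InitSem^k$ with $\overline{s}_0 \models \psi$, and arbitrary traces $t_1 \in \Traces(s_1^0),\ldots,t_l\in\Traces(s_l^0)$, and exhibit $t_{l+1},\ldots,t_k$ such that the observation projections satisfy $\square\phi$.

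First I would define a falsifier ``play-out'' of $t_1,\ldots,t_l$ that interacts with $\strategy$. For every falsifier state $\overline{s}$ reached during the play, I track counters $p_1,\ldots,p_l$ recording how many times each coordinate $i\le l$ has already been scheduled; since $\TS$ is deterministic and $\TrSem$ is total, $t_i[p_i]$ determines a unique outgoing label, and I let the falsifier offer $\overline\ell^{\forall} = \langle \ell_1,\ldots,\ell_l\rangle$ accordingly. The verifier's response $\strategy(\overline{s},\overline\ell^\forall)$ yields a valid schedule $M$, labels $\overline\ell^\exists$, and a successor $\overline{s}'$ with $\overline{s}\overset{M,\overline\ell}{\rightsquigarrow}\overline{s}'$. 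Repeating this produces an infinite play $\sigma_0,\sigma_1,\ldots$ whose falsifier states $\overline{s}^0,\overline{s}^1,\ldots$ avoid $\bstates$ and whose schedules $M_0,M_1,\ldots$ are all valid. From the projections of this sequence to each coordinate $i$ (removing consecutive stuttering when $i\notin M_j$) I define $k$ sequences $\tilde t_1,\ldots,\tilde t_k$; for $i\le l$ they coincide with prefixes of $t_i$ by construction, and for $i>l$ they are sequences of states of $\TS$ connected by $\TrSem$ from $s_i^0$.

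The next step is to extend each $\tilde t_i$ to a full trace $t_i'\in\Traces(s_i^0)$ (for $i>l$ by appending any continuation after the last time $i$ is scheduled, if needed) and to argue that the obtained $t_1',\ldots,t_k'$ witness the semantics. For the observation-projected assertion $\observe{t_1'}{\xi_1},\ldots,\observe{t_k'}{\xi_k}\models\square\phi$, I consider any index $n$ below all projection lengths; I claim that there is some round $j$ in the play where each coordinate $i$ is at its $n$-th observation state. This is the main technical point: case (\ref{lbl:case_1}) of \Cref{valid-schedules} prevents the schedule from bypassing a coordinate once it has reached an observation point (only traces not in their $\xi_i$ may be scheduled), and case (\ref{lbl:case_2}) is the only way to leave an observation-point configuration, which forces all coordinates to meet simultaneously at the $n$-th observation point. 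Since the play never enters $\bstates$, at that round $\phi$ holds; mapping back gives $\phi$ at position $n$ of the projected sequences.

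The hard part will be the synchronization argument just sketched: I must rule out pathological plays in which some coordinate $i\leq l$ is scheduled only finitely often while its remaining suffix still contains observation points, or in which traces ``pass'' an observation point. The validity constraint $M \in \validS(\overline s)$ is exactly designed to preclude this, but making the argument precise requires an inductive claim over $n$ that, whenever the first $n$ observation points of each coordinate have been simultaneously aligned in the play, the play inevitably reaches a configuration where the $(n{+}1)$-th is either simultaneously aligned or permanently unavailable in some coordinate (in which case $n+1$ is beyond the minimum projection length and need not be checked). A small secondary subtlety is that the verifier's strategy is memoryless over $\vstates$ rather than over histories; because $\vstates$ already records the falsifier's last move, this suffices to carry out the simulation without needing history-dependent strategies. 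Modulo these points, the theorem follows.
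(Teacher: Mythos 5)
You should first note that the paper does not prove this theorem at all: it is imported from \cite{DBLP:conf/cav/BeutnerF22} (and the paper's game even differs slightly from theirs, cf.\ the footnote about two- vs.\ three-step rounds), so there is no written proof in the paper to compare against. Your play-out construction --- feed the labels of the given $\forall$-traces to a winning memoryless strategy, read off the $\exists$-traces and the alignment from the resulting play, and use validity of schedules as a synchronization barrier --- is the natural route, and the handling of determinism, counters, and memorylessness is fine.

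The genuine gap is in the step you yourself flag as the main technical point, and the justification you offer for it is incorrect. Validity of schedules does \emph{not} preclude plays in which a coordinate $i\leq l$ is scheduled only finitely often while the unread suffix of $t_i$ still contains observation points: the verifier may forever schedule only coordinates that never sit at their observation points, and nothing in \Cref{valid-schedules} forbids this; such plays are possible and must be shown \emph{harmless}, not impossible. Likewise, patching starved existential coordinates by ``appending any continuation'' can add observation points to $t_i'$, so your dichotomy ``the $(n{+}1)$-th observation is simultaneously aligned or permanently unavailable in some coordinate'' cannot be argued coordinate-by-coordinate as sketched. The argument that actually closes this is different: since every round schedules a nonempty $M$, some coordinate is scheduled infinitely often; after the last ``sync'' state of the play (a falsifier state with all coordinates at observation points) such a coordinate can never again reach an observation point, because leaving one requires case~(\ref{lbl:case_2}) of \Cref{valid-schedules}, i.e., another sync state, and it would otherwise never be scheduled again. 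Hence that coordinate's projection has length at most the number of sync states, so $\min_i|\observe{t_i'}{\xi_i}|$ is bounded by that number; and the barrier argument (no coordinate can step past an observation point except at a sync state, where the only valid schedule is $\{1,\ldots,k\}$) shows that the $n$-th sync state consists exactly of the $n$-th observation states of all coordinates and, not being in $\bstates$, satisfies $\phi$. With this lemma your construction goes through; as written, the central claim is asserted with a false justification, so the proof is incomplete.
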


\iflong
We note that with $\forall^*\exists^*$ properties, the existence of an infinite trace $t_i$, $l+1\leq i\leq k$ in which $\xi_i$ is never satisfied means that the property holds vacuously.
In this case, a winning strategy for the verifier is also trivial: to always select that trace and only progress along it, effectively starving all the other traces.
Such situations are reminiscent of pre- and postconditions in safety verification, which do not address cases of nontermination.
Interesting cases of hyperproperties 
are such where $\xi_i$ is enabled infinitely often in all traces. 
\fi


\subsection{CHC Encoding of \aehyper Verification in the Case of Finite Branching}
\label{sec:beyond:finite}
Having defined the game semantics based on~\cite{DBLP:conf/cav/BeutnerF22}, we now 
encode the existence of a winning strategy for the verifier as a satisfiability problem in FOL, and use the transformation from \Cref{transformation} 
to obtain a system of CHCs. 
We start with the case of transition systems with finite branching, in which the CHC encoding is both sound and complete w.r.t. the game semantics. 
That is, the set of CHCs is satisfiable if and only if there exists a winning strategy for the verifier.
In the next subsection, we handle infinite branching, where completeness is no longer guaranteed (both for the FOL encoding and for the CHC encoding), but soundness is preserved. 

To encode the game semantics of $\varphi$ in $\TS$,
we introduce unknown predicates $\{A_\au\}_{\au\in\U}$ that describe a strategy for the verifier in $\game_{\TS,\varphi}$, where $\U$ is the set of possible choices of the verifier. 
In addition, we introduce an unknown predicate $\Inv$
that encodes an inductive invariant that ensures  the strategy is winning \iflong (i.e., when playing by the strategy the verifier never reaches a bad state). \else. \fi

Recall that we first consider the case where $\TS$ has a finite branching degree, i.e., the set of labels $\Lbl$ is finite.
This makes it possible to define $\U$ as the set of all possible concrete choices of the verifier and
introduce a predicate $A_\au$ per every possible choice of the verifier.
To do so, we define $\U=\allScheds\times\Lbl^{k-l}$, where $\allScheds = \mathcal{P}(\{1,\ldots,k\})\setminus \{\varnothing\}$ is the set of possible schedules
(see \Cref{def:arbiter}), and $\Lbl^{k-l}$ are the choice labels for constructing the traces assigned to $\{\pi_i\}_{i=l+1..k}$. Note that $\U$ is finite in this case.
For each $\au = \langle M,\overline{\ell}{}^\exists \rangle\in\U$, the predicate $A_\au$
describes the verifier states in which the verifier chooses $\au$ for its move.
Recall that verifier states consist of both the previous state of the verifier,
captured by the composed state vocabulary $\V$ defined as before,
and the last move of the falsifier, captured by 
label variables $\langle\Lblone_1,\ldots,\Lblone_l\rangle$.
Recall that $\Lvoc = \langle\Lblone_1,\ldots,\Lblone_k\rangle$. 
We denote $\LvocA = \langle\Lblone_1,\ldots,\Lblone_l\rangle$,
$\LvocE = \langle\Lblone_{l+1},\ldots,\Lblone_k\rangle$, such that
$\Lvoc =  \LvocA \cup \LvocE$.
Then, the $A_\au$ predicates are defined over $\V \cup \LvocA$.
The $\Inv$ predicate 
is defined over $\V$ only, as it describes a set of falsifier states.

The formulas in \Cref{beyond:formulas:arbiter}
formalize the requirements that ensure that
$\{A_\au\}_\au$ defines a winning strategy for the verifier, in which the set of reachable falsifier states are overapproximated by $\Inv$.
The formulas are similar to the ones in \Cref{safety:formulas:arbiter}. In particular, the bad states and valid schedules are encoded by $\Bad(\V)$ and $\valid_M(\V)$, respectively, defined in \Cref{sec:fol-k-safety}.
To account for the alternating choices of the falsifier
($\overline{\ell}{}^\forall$) and verifier ($\langle M, \overline{\ell}{}^\exists\rangle$) in every round,
we define
%
\[
\begin{array}{lcl}
 \Delta_{M}(\V, \V', \Lvoc) & \eqdef &
\bigwedge_{i\in M} \Tr(V_i,a_i,V_i') \wedge \bigwedge_{i\not\in M} V_i = V_i' \\
\delta_{M,\overline{\ell}{}^\exists}(\V, \V', \LvocA) & \eqdef &
\Delta_{M}(\V, \V', \Lvoc)\big[\LvocE \mapsto \overline{\ell}{}^\exists\big]
\end{array}
\]
$\Delta_M$ is the formula defined in \Cref{sec:fol-k-safety} to capture the semantics of a transition in a composed system, \ie, 
$\overset{M, \overline{\ell}}{\rightsquigarrow}$ (see \Cref{def:composed}).
$\delta_{M,\overline{\ell}{}^\exists}$ is then the projection of $\Delta_M$
to a concrete choice of labels $\overline{\ell}{}^\exists$ for the existentially quantified traces;
the labels for the universals, captured by $\LvocA$, remain free.
\iflong 
This means that
$\overline{s}, \overline{s}', \overline{\ell}{}^\forall$ (valuations to $\V,\V',\LvocA$)
satisfy $\delta_{M,\overline{\ell}{}^\exists}$ iff
verifier choice 
$\langle M,\overline{\ell}{}^\exists \rangle$  from $(\overline{s},\overline{\ell}{}^\forall)$ leads to $\overline{s}'$.
The formulas in line 4 of \Cref{beyond:formulas:arbiter} then ensure that if state $\overline{s}'$ is reached by step $\overline{\ell}{}^\forall$ of the falsifier followed by step $\langle M,\overline{\ell}{}^\exists \rangle$ of the verifier from $\overline{s} \models \Inv$ s.t.\ $\overline{s},\overline{\ell}{}^\forall \models A_{M,\overline{\ell}{}^\exists}$, then $\overline{s}' \models \Inv$. 

\fi








\iflong
The verifier having a winning strategy in the game corresponds to satisfiability of the  formulas in \Cref{beyond:formulas:arbiter}, which ensures soundness and completeness of the FOL encoding w.r.t.\ the game semantics:
\fi
%

\begin{figure}[t]
\hspace{-0.05\textwidth}
\resizebox{1.1\textwidth}{!}{
$
\renewcommand{\arraystretch}{1.2}
\begin{array}{l|l}
\subfloat[]{$
\begin{array}{lr@{}l}
    & \bigwedge_i \Init(V_i) \land \psi(\V)  \rightarrow{} & \Inv(\V)  \\
    & \Inv(\V) \land \Bad(\V) \rightarrow{} & \bot \\
    \eqForall & \Inv(\V) \wedge A_{M,\overline{\ell}{}^\exists}(\V,\LvocA) \land
\lnot\valid_M(\V) \rightarrow{} & \bot \\
    \eqForall & \Inv(\V)\land A_{M,\overline{\ell}{}^\exists}(\V,\LvocA) \land \delta_{M,\overline{\ell}{}^\exists}(\V, \V', \LvocA)
 \rightarrow{} & \Inv(\V') \\[.3em]
\multicolumn{3}{r}{
    \Inv(\V) \rightarrow \displaystyle \hspace{-1em}\bigvee_{\langle M,\overline{\ell}{}^\exists\rangle\in\U} \hspace{-1em}
    A_{M,\overline{\ell}{}^\exists}(\V,\LvocA)}
\end{array}
$
\label{beyond:formulas:arbiter}
}
&
\subfloat[]{$
\begin{array}{l@{\hspace{-0.5em}}r@{}l}
    & \displaystyle\bigwedge_{\langle M,\overline{\ell}{}^\exists\rangle\in\U} D_{ M,\overline{\ell}{}^\exists}(\V) \wedge \bigwedge_i \Init(V_i) \wedge \psi(\V) \rightarrow{} & \bot  \\
    \eqForall &  \Bad(\V) \rightarrow{} & D_{ M,\overline{\ell}{}^\exists}(\V) \\
    \eqForall & \neg \valid_M(\V) \rightarrow{} & D_{ M,\overline{\ell}{}^\exists}(\V) \\
    \eqForall & \displaystyle \bigwedge_{\langle M',\overline{\ell}'{}^\exists\rangle\in\U} D_{ M',\overline{\ell}'{}^\exists}(\V') \land \delta_{ M,\overline{\ell}{}^\exists}(\V, \V')
 \rightarrow{} & D_{ M,\overline{\ell}{}^\exists}(\V)
\end{array}
$ \label{beyond:formulas:doomed}
}
\end{array}
$
}
\vspace{-0.6cm}

\raggedright\scalebox{0.8}{$\big(\,\eqForall = \forall \langle M,\overline{\ell}{}^\exists\rangle\in\U\big)$}
\caption{\aehyper verification scheme before (a) and after (b) the transformation.}
\label{beyond:formulas}
\vspace{-1em}
\end{figure}

\begin{theorem} \label{thm:game-hyper-fol-encoding}
The set of formulas in \Cref{beyond:formulas:arbiter} is satisfiable iff $\TS \modelsg \varphi$.
\end{theorem}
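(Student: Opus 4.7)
The plan is to prove both directions by a natural correspondence between satisfying models of the FOL encoding and winning verifier strategies together with their reachable sets in $\game_{\TS,\varphi}$. The bridge is that an interpretation of $\Inv$ corresponds to a verifier-invariant set of falsifier states, and an interpretation of each $A_{M,\overline{\ell}{}^\exists}$ corresponds to picking, at every verifier state $(\overline{s},\overline{\ell}{}^\forall)$, one concrete choice $\langle M,\overline{\ell}{}^\exists\rangle\in\U$ to play. Finite branching (finite $\Lbl$) makes $\U$ finite so that $\{A_\au\}_{\au\in\U}$ is a well-defined finite family.

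For the ``$\Leftarrow$'' direction, assume $\TS\modelsg\varphi$ and fix a memoryless winning strategy $\strategy:\vstates\to\gstates$. Define $\Inv^{\mathcal M}\subseteq\States^k$ as the set of falsifier states that appear in some $\strategy$-consistent play starting in $\istates$. Define, for each $\au=\langle M,\overline{\ell}{}^\exists\rangle\in\U$, the relation $A_\au^{\mathcal M}\subseteq\States^k\times\Lbl^l$ to hold at $(\overline{s},\overline{\ell}{}^\forall)$ iff $\strategy(\overline{s},\overline{\ell}{}^\forall)=\overline{s}'$ for the unique $\overline{s}'$ induced by choice $\langle M,\overline{\ell}{}^\exists\rangle$ from $(\overline{s},\overline{\ell}{}^\forall)$ (this is where determinism of $\TS$ is used: a choice of $M$ and labels picks a single successor). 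Then I verify the five requirements in \Cref{beyond:formulas:arbiter}: formula~1 holds because $\istates\cap\psi\subseteq\Inv^{\mathcal M}$ trivially; formula~2 holds because $\strategy$ is winning, hence $\Inv^{\mathcal M}\cap\bstates=\varnothing$; formula~3 holds because $\strategy$ only picks schedules $M\in\validS(\overline{s})$, which by construction is reflected in $A^{\mathcal M}_\au$; formula~4 holds because if $\overline{s}\in\Inv^{\mathcal M}$ and the verifier moves according to $A_\au^{\mathcal M}$, the successor $\overline{s}'$ is, by definition, reachable under $\strategy$ and thus in $\Inv^{\mathcal M}$; formula~5 holds because $\strategy$ is total, so at every $(\overline{s},\overline{\ell}{}^\forall)$ with $\overline{s}\in\Inv^{\mathcal M}$ some $\au$ is picked and thus $A_\au^{\mathcal M}$ holds.

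For the ``$\Rightarrow$'' direction, fix a satisfying interpretation $(\Inv^{\mathcal M},\{A_\au^{\mathcal M}\}_{\au\in\U})$. I will construct a memoryless verifier strategy $\strategy$ by, at each verifier state $(\overline{s},\overline{\ell}{}^\forall)$ with $\overline{s}\in\Inv^{\mathcal M}$, choosing some $\au=\langle M,\overline{\ell}{}^\exists\rangle$ with $A_\au^{\mathcal M}(\overline{s},\overline{\ell}{}^\forall)$ (which exists by formula~5) and letting $\strategy(\overline{s},\overline{\ell}{}^\forall)$ be the successor determined by $\langle M,\overline{\ell}{}^\exists\rangle$ according to $\overset{M,\overline{\ell}}{\rightsquigarrow}$; at verifier states with $\overline{s}\notin\Inv^{\mathcal M}$ the strategy can be arbitrary (choosing any legal move, whose existence follows from totality of $\vsteps$, which is guaranteed because $\TrSem$ is total and $\validS(\overline{s})\neq\varnothing$). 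I then prove by induction on the length of a $\strategy$-consistent play from an initial state in $\istates\cap\psi$ that every falsifier state visited belongs to $\Inv^{\mathcal M}$: the base case is formula~1, and the inductive step uses formula~4 together with the definition of $\strategy$ and the identification of $\delta_{M,\overline{\ell}{}^\exists}$ with one-round moves. Formula~3 ensures the verifier's chosen $M$ is valid, so $\strategy$ indeed respects $\vsteps$. Finally, formula~2 guarantees that no state in $\Inv^{\mathcal M}$ is in $\bstates$, so $\strategy$ avoids bad states and is winning.

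The main obstacle, I expect, is the careful bookkeeping of the two-step round structure: $\delta_{M,\overline{\ell}{}^\exists}(\V,\V',\LvocA)$ collapses a falsifier move followed by a verifier move into a single relation, and I must argue that the implicit universal quantification over $\LvocA$ in formula~4 is exactly what matches the falsifier's arbitrary choice of $\overline{\ell}{}^\forall$ in the game. A secondary subtlety is the use of determinism of $\TS$ to turn the verifier's syntactic choice $\langle M,\overline{\ell}{}^\exists\rangle$ into a unique next composed state, which is essential for translating $A_\au^{\mathcal M}$ into a well-defined strategy function and back. Once these points are in place, the two constructions are straightforward inverses up to the usual freedom in verifier choices at unreachable states.
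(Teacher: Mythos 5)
Your proposal is correct and follows essentially the same route as the paper's proof: a direct two-way translation between models of the formulas and (memoryless) winning strategies, using the last formula to extract a verifier choice at every $\Inv$-state and formulas 1--4 (via the obvious induction over rounds) to show the strategy stays within $\Inv$, respects valid schedules, and avoids bad states; your use of the strategy-reachable falsifier states for $\Inv$ instead of the winning region is an inessential variation. One small bookkeeping point: since distinct choices $\langle M,\overline{\ell}{}^\exists\rangle$ (possibly with an invalid $M$, e.g.\ via stuttering) can induce the same successor state, in the strategy-to-model direction you should let $A_{M,\overline{\ell}{}^\exists}$ hold only for a witnessing \emph{valid} choice realizing the strategy's move, so that formula 3 is not accidentally violated.
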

\begin{proof}
A solution for \Cref{beyond:formulas:arbiter} induces a winning strategy $\chi$ for the verifier in the
game for $\varphi$ and $\TS$:
\iflong the strategy $\strategy$ is given by 
\fi 
$\chi(\overline{s},\overline{\ell}{}^\forall) = \overline{s}'$ for $\overline{s} \models \Inv$,
where $\overline{s}'$ is reached by choosing $\langle M, \overline{\ell}{}^\exists \rangle$
(\ie, $\overline{s}, \overline{s}', \overline{\ell}{}^\forall  \models \delta_{M,\overline{\ell}{}^\exists}$) such that $\overline{s},\overline{\ell}{}^\forall \models A_{M,\overline{\ell}{}^\exists}$;
such $\overline{s}'$ must exist because 
the last formula
states that there must always be a choice for the verifier in falsifier states that satisfy $\Inv$.
For $\overline{s} \not\models \Inv$, $\chi(\overline{s},\overline{\ell}{}^\forall)$ is defined arbitrarily.
In the other direction, given a winning strategy for the verifier, we define the interpretation of $\Inv$ to be its winning region and the interpretation of $A_{M, \overline{\ell}{}^\exists}$ to consist of the falsifier states ($\overline{s},\overline{\ell}{}^\forall$) where the strategy chooses $\overline{s}'$  
such that $\overline{s},\overline{\ell}{}^\forall \models A_{M,\overline{\ell}{}^\exists}$.
\end{proof}

\begin{remark}\label{rem:safety-as-aehyper}
For $k$-safety properties, the encoding in \Cref{beyond:formulas:arbiter}, based on the games semantics, is equivalent to the encoding in \Cref{safety:formulas:arbiter} (\Cref{safety}).
In particular, in this case, the set $\LvocE$ is empty, which means that $\overline{\ell}{}^\exists = \langle \rangle$, 
resulting in a game with finite branching, namely only the choices of the schedule $M$.
Note that for such properties, the benefits of the game semantics are less obvious since if $\TS \models \varphi$, then \emph{every} strategy is winning for the verifier.
\end{remark}

\medskip
Applying our transformation to the formulas in \Cref{beyond:formulas:arbiter} results in the \iflong system of \fi  CHCs in \Cref{beyond:formulas:doomed}.
Intuitively, $A_{M,\overline{\ell}{}^\exists}$ 
\iflong (in \Cref{beyond:formulas:arbiter}) \fi
describe the winning strategy for the verifier:
for ``safe'' states, represented by $\Inv$, and given a move made by the falsifier, if the verifier chooses to move
according to $\langle M,\overline{\ell}{}^\exists\rangle$, then it stays in the ``safe'' region. 
In contrast, \iflong the uninterpreted predicate 
\fi
$D_{ M,\overline{\ell}{}^\exists}$ represents ``doomed'' states. 
Namely, if the verifier chooses to move according to $\langle M,\overline{\ell}{}^\exists\rangle$ from a 
state in $D_{ M,\overline{\ell}{}^\exists}$, then the falsifier can force reaching a bad state for every choice of the verifier  in the next steps of the game.
\iflong Due to that, these states are referred to as ``doomed'', as there is no winning strategy for the verifier if a wrong choice is made.
\fi

\begin{corollary}
\label{cor:chc-encoding-game-finite}
The set of CHCs in \Cref{beyond:formulas:doomed} is satisfiable iff $\TS \modelsg \varphi$.
\end{corollary}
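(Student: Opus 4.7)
The plan is to obtain the corollary as a direct consequence of two results already established in the paper: \Cref{thm:game-hyper-fol-encoding}, which connects $\TS \modelsg \varphi$ to the satisfiability of the FOL encoding in \Cref{beyond:formulas:arbiter}, and \Cref{thm:trans}, which provides a satisfiability-preserving transformation from the general FOL scheme in \Cref{transformation:formulas:arbiter} to the CHC scheme in \Cref{transformation:formulas:doomed}. The main task is therefore to verify that \Cref{beyond:formulas:arbiter} is a faithful instance of the general scheme in \Cref{transformation:formulas:arbiter}, so that applying \Cref{thm:trans} yields exactly the CHCs in \Cref{beyond:formulas:doomed}.

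First, I would exhibit the instantiation explicitly. Taking the index set to be $\U = \allScheds \times \Lbl^{k-l}$ (so that $u$ ranges over pairs $\langle M,\overline{\ell}{}^\exists\rangle$), I set $\V$ to the composed-state vocabulary, $\V'$ to its primed copy, and $\W = \LvocA$ (the label variables for the universal traces). The constraint holes are instantiated as $\alpha(\V) \eqdef \bigwedge_i \Init(V_i) \land \psi(\V)$, $\beta(\V) \eqdef \Bad(\V)$, $\gamma_u(\V,\W) \eqdef \lnot\valid_M(\V)$ (which does not in fact depend on $\W$ or on $\overline{\ell}{}^\exists$), and $\delta_u(\V,\V',\W) \eqdef \delta_{M,\overline{\ell}{}^\exists}(\V,\V',\LvocA)$. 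Under this substitution, lines 1--5 of \Cref{beyond:formulas:arbiter} become exactly lines 1--5 of \Cref{transformation:formulas:arbiter}. Applying \Cref{thm:trans} then produces exactly the CHCs in \Cref{beyond:formulas:doomed} (the predicates $D_u$ renaming to $D_{M,\overline{\ell}{}^\exists}$). Combining the two equivalences gives the chain: the CHCs in \Cref{beyond:formulas:doomed} are satisfiable iff the formulas in \Cref{beyond:formulas:arbiter} are satisfiable (by \Cref{thm:trans}) iff $\TS \modelsg \varphi$ (by \Cref{thm:game-hyper-fol-encoding}).

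I do not expect any deep obstacle here, since the real content was already carried by \Cref{thm:trans} and \Cref{thm:game-hyper-fol-encoding}. The only point to check carefully is that $\U$ is genuinely finite under the finite-branching assumption on $\TS$, so that \Cref{transformation:formulas:arbiter} (which is stated for a finite $\U$) is indeed applicable and the big conjunctions and disjunctions indexed by $\U$ in \Cref{beyond:formulas:doomed} are finite, yielding a bona fide set of CHCs. This is immediate from the finiteness of $\Lbl$: both $\allScheds$ and $\Lbl^{k-l}$ are finite, hence so is $\U = \allScheds \times \Lbl^{k-l}$. The proof is then complete.
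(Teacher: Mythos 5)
Your proposal is correct and follows essentially the same route as the paper: the corollary is obtained there exactly by viewing \Cref{beyond:formulas:arbiter} as an instance of the general scheme of \Cref{transformation:formulas:arbiter} (with $\W$ being the universal-trace label variables and $\U = \allScheds\times\Lbl^{k-l}$, finite by the finite-branching assumption) and composing the satisfiability-preserving transformation of \Cref{thm:trans} with \Cref{thm:game-hyper-fol-encoding}. Your explicit check of the instantiation and of the finiteness of $\U$ matches the paper's implicit argument, so nothing is missing.
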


\begin{example}
The example in  \Cref{beyond:example} fits the case of finite branching if we assume that the integer values in the array $A$ and those of $\mathit{sum}$ and $y$ are bounded modulo $2^m$, and so are the labels $\Lbl$.
This means that the falsifier has $2^m$ possible steps at each game state, and the verifier has $3\cdot 2^m$ ($3$ is the number of possible schedules out of $\{1,2\}$).
In the next subsection we explain how to encode the problem when the integers are considered to be unbounded. 
\end{example}

\subsection{CHC Encoding of \aehyper Verification in the Case of Infinite Branching}
\label{sec:beyond:infinite}

The set of formulas in \Cref{beyond:formulas:arbiter}, and the corresponding system of CHCs in \Cref{beyond:formulas:doomed}
is well defined when the set $U$ is finite.
However, if $\Lbl$ is infinite, i.e. $\TS$ has infinite branching, so is $U$.
In this case, instead of using $\Lbl^{k-l}$ to specify the traces chosen by the verifier,
we define a finite, abstract set of composed labels, denoted $\ALbl$,
to be used by the verifier (the falsifier continues to use the concrete labels to specify their transitions of choice).
Each abstract label in $\ALbl$ is a relational predicate $p$ with free variables $\V$
(the composed vocabulary) that relates the states of different traces.
%
Thus, the vector of individual existential choices $\overline{\ell}{}^\exists$ of the verifier is now replaced with a \emph{single} choice of a (relational) predicate $p\in \ALbl$ over all the copies.
Intuitively, unlike\SH{improve?}
the use of concrete labels to specify the (unique) next transition for each trace 
individually, 
an abstract label $p\in \ALbl$ determines 
the  next transitions 
for the $\exists$  traces 
by restricting the \SH{slight shortening here:}resulting composed post-states. 

%

Specifically, given a composed state $\overline{s}$, a set of labels $\overline{\ell}{}^\forall$ for the $\forall$ traces
and a schedule $M$, 
a predicate $p \in \ALbl$
is used as a \emph{restriction} (inspired by the homonymous concept from \cite{DBLP:conf/cav/BeutnerF22})
of the transitions of the composed system 
according to schedule $M$ with $\forall$-choices 
$\overline{\ell}{}^\forall$,
restricting the set of transitions to those whose target states
satisfy $p$: 
\[
\restrict_{M,p}(\overline{s},\overline{\ell}{}^\forall) \eqdef \{(\overline{s},\overline{\ell}, \overline{s}') \mid \overline{s}' \models p \land \overline{s} \overset{M,\overline{\ell}}{\rightsquigarrow} \overline{s}' \text{ for some } \overline{\ell}{}^\exists \text{ s.t.\ } \overline{\ell} = \overline{\ell}{}^\forall \overline{\ell}{}^\exists\}.
\]

\begin{example}
In \Cref{beyond:example}, at line 13, a  nondeterministic integer value is assigned to variable $y$.
Since the set of integers is infinite, assigning a  unique label $\ell$ to each integer results in an infinite set $\Lbl$.
To specify the choices of the verifier, we therefore define a finite set of abstract labels. An example of such a set is
$\ALbl = \{\mathit{sum}_1 = \mathit{sum}_2,
  \mathit{sum}_1 = y_2,
  \mathit{sum}_1 < y_2,
  \mathit{sum}_1 = \mathit{sum}_2 + A_2[i_2] + y_2\}$. 
%
The restriction $\mathit{sum}_1 = \mathit{sum}_2$
can result in an empty set of transitions (we will return to this point later in the section); 
but the restrictions $\mathit{sum}_1 = y_2$, $\mathit{sum}_1 < y_2$ and 
$\mathit{sum}_1 = \mathit{sum}_2 + A_2[i_2] + y_2$
always define a nonempty set of transitions when $\pc_2 = 13$ and when a schedule $\{2\}\subseteq M$ is chosen: 
those transitions that choose a value for $y_2$ such that the predicate holds after the transition;
there is always at least one such value. 
In fact, for $\mathit{sum}_1 = y_2$ and $\mathit{sum}_1 = \mathit{sum}_2 + A_2[i_2] + y_2$ there is exactly one such value,
while for $\mathit{sum}_1 < y_2$, the set of values (transitions) is infinite.
Note that there are transitions that are not selected by any restriction (those that assign to $y_2$ a value such that none of the predicates hold).
\end{example}

Thus, the abstract labels define a space of \emph{underapproximations} of the transitions of the composed system---%
each of them includes a subset of the actual composed transitions. 



As before, the falsifier uses $\overline{\ell}{}^\forall$ to specify its choice of transitions for the
traces assigned to the universally quantified variables $\pi_{1..l}$. 
The verifier, on the other hand, uses $p \in \ALbl$ 
to specify the transitions 
for the existentially quantified variables $\pi_{l+1..k}$.
%
We then require that \emph{all} of the composed post-states $\overline{s}'$
reached by the verifier's choice $\langle M,p \rangle$ in response to the falsifier's choice 
$\overline{\ell}{}^\forall$ from $\overline{s}$
are winning for the verifier.
This amounts to  proving that \emph{all} restricted traces satisfy $\square\phi$, which would mean that there \emph{exist} traces that do, \emph{as long as the restrictions do not lead to an empty set of traces}.
Therefore, to ensure soundness of the encoding, we require that the restrictions applied to 
$(\overline{s},\overline{\ell}{}^\forall)$
be nonempty. (Note that the choices of the falsifier, $\overline{\ell}{}^\forall$, are not restricted.) 
%

Rather than limiting the set of predicates $p$ used as abstract labels for $(\overline{s},\overline{\ell}{}^\forall)$ to ensure that $\restrict_{M,p}(\overline{s},\overline{\ell}{}^\forall)$ is nonempty,
we ensure nonemptiness by 
applying a restriction to $(\overline{s},\overline{\ell}{}^\forall)$ only when the resulting set of transitions is nonempty;
otherwise, the full set of transitions  is considered. Note that the full set of transitions is never empty since the transition relation is total.
This gives rise to the following definition of a restricted set of transitions according to $p$:
\begin{equation*}
\restricttag_{M,p}(\overline{s},\overline{\ell}{}^\forall) ~\eqdef~
\begin{cases}
\restrict_{M,p}(\overline{s},\overline{\ell}{}^\forall) & \restrict_{M,p}(\overline{s},\overline{\ell}{}^\forall) \neq \emptyset\\
\restrict_{M,\top}(\overline{s},\overline{\ell}{}^\forall)
& \text{otherwise}
\\
\end{cases}
\end{equation*}

%
%
The definition ensures that $\restricttag_{M,p}(\overline{s},\overline{\ell}{}^\forall) \neq \emptyset$  for every $(\overline{s},\overline{\ell}{}^\forall)$.

\paragraph{CHC encoding}
We adapt the formulas in \Cref{beyond:formulas:arbiter} to the case of abstract labels. 
We define $\U=\allScheds\times\ALbl$\iflong, where $\allScheds = \mathcal{P}(\{1,\ldots,k\})\setminus \{\varnothing\}$ is the set of possible schedules 
and $\ALbl$ is the set of relational predicates (restrictions).
%
We introduce an unknown predicate $A_{M,p}$ for each possible choice $\langle M, p\rangle$ of the verifier.
\else. \fi
The formulas from \Cref{beyond:formulas:arbiter} carry over, except that
\begin{inparaenum}[(i)]
\item $\U$
ranges over $\langle M,p \rangle$ instead of $\langle M, \overline{\ell}{}^\exists \rangle$, and %
\item the definition of $\delta_{M,\overline{\ell}{}^\exists}$ from the finite-branching case is  replaced with $\delta_{M,p}$, which captures the transitions according to the abstract labels.
\end{inparaenum}
%
For a schedule 
$M\in\allScheds$ and $p \in \ALbl$, 
$\delta_{M,p}(\V, \V', \LvocA)$ is defined as follows:
\[
\begin{array}{rll}
\allowed_{M,p}(\V, \LvocA) &~\eqdef~& \exists \V',\LvocE \cdot \Delta_{M}(\V, \V', \Lvoc) \wedge p(\V') \\
\delta_{M,p}(\V, \V', \LvocA) &~\eqdef~&
\big(\exists \LvocE \cdot \Delta_{M}(\V, \V', \Lvoc)\big)
\wedge 
\big(\allowed_{M,p}(\V, \LvocA) \to p(\V')\big)
\end{array}
\]
%
where $\Delta_M$ is the formula from \Cref{sec:beyond:finite} that represents the valuations $\overline{s},\overline{s}',\overline{\ell}$ s.t.\ 
the composed system 
has a transition subject to $M$ from $\overline{s}$ to $\overline{s}'$ labeled $\overline{\ell}$.
Intuitively, $\allowed_{M,p}$ captures the composed states $\overline{s}$, the  $\forall$-choices $\overline{\ell}{}^\forall$ and the schedules $M$ for which the restriction to $p$ is nonempty (i.e., \emph{some transition} is possible).
That is, $(\overline{s},\overline{\ell}{}^\forall)\models \allowed_{M,p}$ iff $\restrict_{M,p}(\overline{s},\overline{\ell}{}^\forall) \neq \emptyset$.
%
The implication in $\delta_{M,p}$ then makes sure that a vacuous restriction---i.e., one that leaves no successors---is not applied, but, instead, replaced with the full set of transitions consistent with $\overline{s},
\overline{\ell}{}^\forall, M$.
%
Namely, $\delta_{M,p}$ captures the semantics of $\restricttag_{M,p}(\overline{s},\overline{\ell}{}^\forall)$ (projected on $\overline{s},\overline{s}',\overline{\ell}{}^\forall$). 
Thus, 
in contrast to $\delta_{M,\overline{\ell}{}^\exists}$, which relates each verifier state 
$(\overline{s},\overline{\ell}{}^\forall)$ to exactly \emph{one} falsifier state $\overline{s}'$,
the abstract variant, $\delta_{M,p}$, relates $(\overline{s},\overline{\ell}{}^\forall)$ to a nonempty \emph{set} of falsifier states $\overline{s}'$, according to $\restricttag_{M,p}(\overline{s},\overline{\ell}{}^\forall)$.

Recall that we require that \emph{all} of the post-states obtained by  the verifier's choice $\langle M,p \rangle$ from  $(\overline{s},\overline{\ell}{}^\forall)$ are winning for the verifier.
In fact, the formulas in line 4 of 
\Cref{beyond:formulas:arbiter} already quantify over \emph{all} of the post-states w.r.t.\
$\delta_{M,\overline{\ell}{}^\exists}$. 
The difference is that, given $(\overline{s},\overline{\ell}{}^\forall)$, in the case of $\delta_{M,\overline{\ell}{}^\exists}$,
there is exactly one such post-state, while in the case of 
$\delta_{M,p}$, there is a set of such post-states.
Hence, once 
$\delta_{M,\overline{\ell}{}^\exists}$ is replaced by $\delta_{M,p}$, no additional modification to the encoding is needed.

The resulting encoding is sound, but, unlike the case of finite branching, not complete.

\begin{theorem}\label{thm:infinite-case}
If the set of formulas in \Cref{beyond:formulas:arbiter} adapted to $\ALbl$ 
is satisfiable, 
then $\TS \modelsg \varphi$.
\end{theorem}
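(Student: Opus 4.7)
The plan is to extract a winning strategy for the verifier in $\game_{\TS,\varphi}$ from any model of the adapted set of formulas, in the spirit of the proof of \Cref{thm:game-hyper-fol-encoding}, but with the added care required by the fact that an abstract label $p$ denotes a \emph{set} of concrete successors rather than a single one.

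\textbf{Step 1 (Extracting a strategy).}
Fix a model that interprets $\Inv$ and $\{A_{M,p}\}_{\langle M,p\rangle\in\U}$. I will define a memoryless strategy $\chi : \vstates \to \gstates$ as follows. Let $(\overline{s},\overline{\ell}{}^\forall)$ be a verifier state. If $\overline{s} \models \Inv$, then by the (adapted) last formula there exists $\langle M,p\rangle\in\U$ with $(\overline{s},\overline{\ell}{}^\forall)\models A_{M,p}$; pick any such pair. Since $\restricttag_{M,p}(\overline{s},\overline{\ell}{}^\forall)$ is \emph{nonempty by construction}, choose any $(\overline{s},\overline{\ell},\overline{s}')\in\restricttag_{M,p}(\overline{s},\overline{\ell}{}^\forall)$ and set $\chi(\overline{s},\overline{\ell}{}^\forall)=\overline{s}'$. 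For $\overline{s}\not\models\Inv$, define $\chi$ arbitrarily on $(\overline{s},\overline{\ell}{}^\forall)$ subject to $\vsteps$ being total. The existence of $\overline{\ell}{}^\exists$ witnessing the move of the verifier is guaranteed by the definition of $\restricttag$.

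\textbf{Step 2 (Invariance along $\chi$-plays).}
I will prove by induction on the round that every falsifier state $\overline{s}$ visited in a $\chi$-play starting from an initial state satisfying $\psi$ satisfies $\Inv$, and every verifier move made by $\chi$ from such a state uses a valid schedule. The base case is formula~1: $\bigwedge_i \Init(V_i)\wedge\psi(\V)\to\Inv(\V)$. For the inductive step, suppose $\overline{s}\models\Inv$; for any falsifier choice $\overline{\ell}{}^\forall$, the strategy selects $\langle M,p\rangle$ with $A_{M,p}(\overline{s},\overline{\ell}{}^\forall)$. Formula~3 (adapted) then forces $\valid_M(\V)$, so $M\in\validS(\overline{s})$. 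To obtain consecution, observe that the successor $\overline{s}'$ chosen by $\chi$ satisfies $\delta_{M,p}(\overline{s},\overline{s}',\overline{\ell}{}^\forall)$: indeed, if $\allowed_{M,p}(\overline{s},\overline{\ell}{}^\forall)$ holds, then by construction $\overline{s}'\models p$ and $(\overline{s},\overline{s}',\overline{\ell}{}^\forall)$ satisfies both conjuncts; if it does not, the implication $\allowed_{M,p}\to p(\V')$ is vacuously true, and the first conjunct holds since $\restricttag_{M,p}$ defaults to all $M$-successors. Hence formula~4 (adapted) yields $\overline{s}'\models\Inv$.

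\textbf{Step 3 (No bad state is reached).}
By formula~2, every $\Inv$-state falsifies $\Bad$, i.e., does not belong to $\bstates$. Combined with Step~2, every falsifier state encountered in a $\chi$-play from an initial state in $\istates$ lies outside $\bstates$; since the game rules ensure play continues indefinitely ($\vsteps,\fsteps$ are total and $\restricttag$ is nonempty), the play is infinite and winning for the verifier. Therefore $\chi$ is a winning strategy, which by definition means $\TS\modelsg\varphi$.

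\textbf{Expected main obstacle.}
The subtle point, and the reason soundness holds but completeness does not, is precisely the interaction between the universally quantified post-state in the consecution formula and the existential choice implicit in $\restricttag$. The argument works because the consecution formula forces \emph{every} $\delta_{M,p}$-successor into $\Inv$, so the verifier may safely commit to the abstract pair $\langle M,p\rangle$ and defer the concrete choice of $\overline{\ell}{}^\exists$ to the very end; meanwhile the nonemptiness built into $\restricttag$ guarantees at least one concrete move exists. I expect that the only delicate bookkeeping is the case analysis on whether $\allowed_{M,p}$ holds, which dictates whether the restriction is applied or the full transition relation is used, and verifying that $\delta_{M,p}$ captures both cases correctly.
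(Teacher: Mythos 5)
Your proposal is correct and follows essentially the same route as the paper's (much terser) argument: since the consecution formula quantifies over \emph{all} $\delta_{M,p}$-successors, every move consistent with the chosen $\langle M,p\rangle$ keeps the play inside $\Inv$, and the built-in nonemptiness of $\restricttag_{M,p}$ guarantees at least one concrete move exists, so picking any one yields a winning strategy. Your Steps 1--3 simply spell out the induction and the case analysis on $\allowed_{M,p}$ that the paper leaves implicit.
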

\iflong
\begin{proof}
A solution for $A_{M,p}$ defines a set of moves, \emph{all} of which are guaranteed to be winning for the verifier. The definition of $\delta_{M,p}$ ensures that this set is not empty. Therefore, a strategy that chooses one of the moves in the set is winning.
\end{proof}
\fi

\begin{example}
Going back to the example in \Cref{beyond:example}, choosing schedule $M=\{2\}$ and restriction $\ell^\sharp = (\mathit{sum}_1 = \mathit{sum}_2 + A_2[i_2] + y_2)$
when $\pc_2 = 13$ ensures that the unique value of $y_2$ that satisfies the restriction is selected.
With this value chosen, the assignment of the next line will produce a value of $\mathit{sum}_2$ that is equal to that of $\mathit{sum}_1$.
This gives rise to the following winning strategy (at every iteration): (i) schedule $\{1\}$ with any restriction until $\pc_1=7$; (ii) schedule $\{2\}$ until $\pc_2=13$, then schedule $\{2\}$ again with $\ell^\sharp = (\mathit{sum}_1 = \mathit{sum}_2 + A_2[i_2] + y_2)$, then $\{2\}$ again with any restriction;
(iii) conclude the iteration by scheduling $\{1,2\}$.
As explained, the inductive invariant $\mathit{sum}_1=\mathit{sum_2}$ is preserved in this behavior,
\emph{and} there are no ``stuck'' states (since, by  construction of $\delta_{M,p}$,
empty restrictions are lifted to the full set of transitions).
\end{example}

As a corollary of \Cref{thm:infinite-case}, satisfiability of the aforementioned formulas ensures that 
$\TS \models \varphi$. 
To obtain an equi-satisfiable CHC encoding, we 
apply the transformation of \Cref{transformation}.
The resulting CHC encoding consists  of the formulas in \Cref{beyond:formulas:doomed} adapted to use $\ALbl$ in the same way the formulas in \Cref{beyond:formulas:arbiter} are adapted.

\begin{corollary}
If the set of CHCs in \Cref{beyond:formulas:doomed} adapted to $\ALbl$ 
is satisfiable, 
then $\TS \modelsg \varphi$.
\end{corollary}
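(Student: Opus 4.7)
The plan is to obtain this corollary as a direct composition of two results already established in the paper: the satisfiability-preserving transformation of \Cref{thm:trans} and the soundness statement for the abstract-label FOL encoding, \Cref{thm:infinite-case}.

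First, I would verify that the adapted FOL encoding (\Cref{beyond:formulas:arbiter} with $\U = \allScheds \times \ALbl$ and $\delta_{M,\overline{\ell}{}^\exists}$ replaced by $\delta_{M,p}$) still fits the generic scheme of \Cref{transformation:formulas:arbiter}. This amounts to identifying the correspondence $\V \mapsto \V$, $\W \mapsto \LvocA$, $\V' \mapsto \V'$, $\W' \mapsto \LvocA'$, $\alpha \mapsto \bigwedge_i \Init(V_i) \land \psi(\V)$, $\beta \mapsto \Bad(\V)$, $\gamma_{\langle M,p\rangle} \mapsto \lnot \valid_M(\V)$, and $\delta_{\langle M,p\rangle} \mapsto \delta_{M,p}(\V,\V',\LvocA)$. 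Note that $\delta_{M,p}$ is a constraint (a pure first-order formula over $\Sigma$ with no occurrence of $\Inv$ or $A_\au$), since the existential quantification over $\LvocE$ and the predicate $p \in \ALbl$ are theory-level. The set $\U$ is finite because $\allScheds$ is finite and $\ALbl$ is a \emph{finite} set of relational predicates (this is the whole point of the abstraction in \Cref{sec:beyond:infinite}). Thus every hypothesis of \Cref{transformation:formulas:arbiter} is met.

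Second, I would observe that the CHCs in \Cref{beyond:formulas:doomed} adapted to $\ALbl$ are, by construction, precisely the system obtained from the adapted \Cref{beyond:formulas:arbiter} by the instantiation of the generic transformation of \Cref{transformation:formulas:doomed} with the correspondence above. Therefore \Cref{thm:trans} applies, and the two systems are equi-satisfiable. In particular, if the adapted CHCs are satisfiable, then the adapted FOL encoding is satisfiable.

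Finally, \Cref{thm:infinite-case} yields $\TS \modelsg \varphi$, concluding the proof. The main (and only) subtlety is the bookkeeping in the first step — checking that the pieces $\alpha, \beta, \gamma_\au, \delta_\au$ remain honest constraints (contain no unknown predicates from $\CHCPreds$) in the adapted encoding; the predicates $p \in \ALbl$ are fixed theory-definable formulas rather than uninterpreted predicates, so this is immediate. No new reasoning about the game semantics or the abstraction is required at this stage.
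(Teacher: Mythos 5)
Your proposal is correct and follows essentially the same route as the paper: the corollary is obtained by noting that the $\ALbl$-adapted formulas of \Cref{beyond:formulas:arbiter} still instantiate the generic scheme of \Cref{transformation:formulas:arbiter} (with $\U = \allScheds\times\ALbl$ finite and $\delta_{M,p}$ a genuine constraint, since each $p\in\ALbl$ is an interpreted relational formula, not an unknown predicate), so \Cref{thm:trans} gives equi-satisfiability with the adapted CHCs of \Cref{beyond:formulas:doomed}, and \Cref{thm:infinite-case} then yields $\TS \modelsg \varphi$. Your explicit bookkeeping of the correspondence $\alpha,\beta,\gamma_\au,\delta_\au$ is exactly the check the paper leaves implicit.
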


\section{Evaluation}
\label{eval}



\noindent
We implemented our CHC-encoding approach in a tool called \ours, 
on top of Z3~\cite{DBLP:conf/tacas/MouraB08} (4.12.0) through its Python API, using \Spacer~\cite{DBLP:conf/cav/KomuravelliGC14,DBLP:conf/cav/Gurfinkel22} as a CHC solver.
\ours takes as input a CFG, or several CFGs,
whose transitions are annotated with two-vocabulary first-order formulas, 
and constructs a formula expressing the transition relation $\Tr$. 
%
The specification is provided as: \begin{enumerate*}[label=(\roman*)]
    \item a quantifier prefix $\forall \forall$, $\forall \exists$, or $\forall \forall \exists$,
    \item observation points $\xi_i$ and 
    \item safety condition $\phi$ that must hold globally in all observations.
\end{enumerate*}
From that, the CHC encoding (\Cref{safety}, \Cref{sec:beyond}) is constructed and
passed to \Spacer for solving.
\ours supports all first-order theories supported by \Spacer (in our experiments, we used the theories of integer arithmetic and arrays).
\ours further provides the option to apply predicate abstraction with a user-provided set of predicates, same as \cite{DBLP:conf/cav/BeutnerF22}. The abstraction is incorporated into the CHC encoding using the implicit abstraction encoding~\cite{DBLP:conf/tacas/CimattiGMT14}. Notably,
many of the benchmarks shown here are solved by \ours \emph{even without} an abstraction, that is, 
directly over the concrete state.

In the area of hyperproperty verification,
there are already several tools present, and the objective
of our evaluation is to compare with such.
Still, the field is not mature enough to have a 
standardized specification format (as is the case with SMTLIB and SV-COMP, to name a few).
As a result, each tool has its own, opinionated, format, which varies from logical formulas to control-flow graphs. This makes it technically difficult to compare results of multiple solutions.
In particular, benchmarks taken from previous work come in a range of formats, dictated by the tools that introduced them.
A few of the benchmarks were translated by previous
authors and, thanks to their efforts,
are available in more than one format.
For the majority of them, 
manual work is required for translating the benchmarks, and, more importantly, there is no one accepted translation, and the translation can introduce artifacts in the evaluation.

This forced us to prioritize the comparisons in our experiments. We chose to focus on comparison with the most closely related tools to our work. These are HyPA~\cite{DBLP:conf/cav/BeutnerF22}, Pdsc~\cite{DBLP:conf/cav/ShemerGSV19}, and PCsat~\cite{DBLP:conf/cav/UnnoTK21}.
HyPA is the most recent tool, and has already collected benchmarks from various previous papers (including Weaver~\cite{DBLP:conf/cav/FarzanV19});
Pdsc and PCsat both use the same first-order encoding as our starting point and thus are also relevant.
HyPA's benchmarks include, in particular, $\forall^*\exists^*$ examples such as GNI, and Pdsc targets non-trivial alignments, and, as such all of its benchmarks have non-lockstep alignments.

\paragraph{Benchmarks}
For the evaluation of our approach we use \emph{the full sets} of benchmarks from HyPA~\cite{DBLP:conf/cav/BeutnerF22} and Pdsc~\cite{DBLP:conf/cav/ShemerGSV19}.
The benchmarks of HyPA are divided into $k$-safety benchmarks, which are adopted from~\cite{DBLP:conf/pldi/SousaD16,DBLP:conf/cav/FarzanV19,DBLP:conf/cav/ShemerGSV19,DBLP:conf/cav/UnnoTK21},
and $\forall^* \exists^*$ benchmarks,
which include refinement properties for compiler optimizations, general refinement of nondeterministic programs and generalized non-interference (GNI).
For two benchmarks, we include both a simplified version as given in~\cite{DBLP:conf/cav/BeutnerF22}, as well as the original example. 
The benchmarks of Pdsc include more non-lockstep examples, as well as all of the comparator benchmarks from~\cite{DBLP:conf/pldi/SousaD16}.
The comparator examples consist of both safe and unsafe instances.
\SH{moved ref to Weaver and rephrased (Jan 2024)}
Weaver~\cite{DBLP:conf/cav/FarzanV19} considers 12 additional (sequential) $k$-safety benchmarks. 
As an additional test case, we manually translated the running example from Weaver, which is a 3-safety property with a nontrivial alignment,
and tested it with HyHorn – HyHorn solved it in 2.25 seconds when provided with a few simple predicates (inequalities between program variables). We believe that being the running example makes it a good representative of the remaining 12. 
This brings our benchmark suite to a total of 112 $k$-safety examples (16 in \Cref{eval:table-results} plus 96 comparator benchmarks).
\SH{from reviewer: Isn't the input of "squares sum" to PCSat simplified?}

\begin{table}[t]
\newcommand\nota{\multicolumn{1}{c|}{\cancel{~~\rule{0pt}{6pt}}}}
%
\resizebox{\textwidth}{!}{
\begin{tabular}{@{\!\!}l@{}l@{}}
\begin{minipage}[t]{10cm}
 \strut\vspace*{-\baselineskip}\newline
\begin{tabular}{|l|rr@{~}|r@{~}|r@{~}|r@{~}|}
\hline
  $k$-safety
  & \multicolumn{2}{c|}{\ours}  & \multicolumn{1}{c|}{HyPA} & \multicolumn{1}{c|}{PCSat} & \multicolumn{1}{c|}{\textsc{Pdsc}} \\
  & ~PA~ & \multicolumn{1}{c|}{\fontsize{8pt}{8pt}\selectfont concrete} &  &  &  \\
\hline
double square NI    & 0.56 & ---  & 67.0 & ---~ & 6.8 \\
double square NI ff & 0.12 & --- &  5.3  & 1.5 & \nota \\
half square NI      & 0.30 & 0.30 & 63.0 & 13.4 & 3.4 \\
squares sum         & 0.17 & 3.41 & 70.4 & 360.7  & 2.8 \\
squares sum (simplified)
                    & 0.10  & 0.30 & 17.2 & \nota  & \nota \\
array insert        & 0.86  & 13.4 & \nota & \nota & 18.5  \\
array insert (simplified) 
                    & 1.33  & 2.58 & 16.2 & 378.6  & \nota  \\
exp1x3              & 0.08  & 0.09 & 2.9  & \nota & \nota \\
fig 3~[FV19]
                    & 0.03  &  --- & 7.9 & \nota & \nota \\
fig 2~[BF22]
                    & 0.11  &  --- & 13.6  & \nota & \nota \\
col item symm       & 0.49  & 0.49 & 14.9 & \nota & \nota \\
counter det         & 0.46  &  --- & 6.2  & \nota & \nota \\
mult equiv          & 0.29  &  --- & 14.2  & \nota & \nota \\
mult equiv (simplified)
                    & 0.19  &  --- &  10.3   &  \nota  & \nota \\
array int mod       & 0.13  &  --- &  \nota  &  \nota  & 58.2 \\
mult dist~[FV19]    & 2.25  &  --- &  \nota  &  \nota  & \nota \\
\hline
\end{tabular}

\end{minipage}
&
\begin{minipage}[t]{5.7cm}
 \strut\vspace*{-\baselineskip}\newline
\renewcommand\cheat{}  
\begin{tabular}{|l|rr@{~}|r@{~}|}
\hline
  $\forall^*\exists^*$
  & \multicolumn{2}{c|}{\ours}  & \multicolumn{1}{c|}{HyPA} \\
  & ~PA~ & \multicolumn{1}{c|}{\fontsize{8pt}{8pt}\selectfont concrete} &  \\
\hline
non-det add            & 1.45 & 2.80 & 3.3 \\
counter sum            & 0.09 & --- & 4.0 \\
async GNI              & 0.36 & 0.37 & \cheat3.8 \\
compiler opt 1         & 0.14 & 0.19 & 1.8 \\
compiler opt 2         & 0.17 & 0.78 & \cheat2.0 \\
refine                 & 0.18  & 0.29 & \cheat4.0 \\
refine 2               & 0.28  & 0.65 & 3.9 \\
smaller                & 0.16  & 0.96 & \cheat2.0 \\
counter diff           & 0.17  & ---  & 6.8 \\
fig 3~[BF22]
                       & 0.81  & ---  & 9.9 \\
P1 (simple)            & 0.19  & 0.59 & 1.4 \\
P1 (GNI)               & 0.26  & 0.75 & 138.7 \\
P2 (GNI)               & 8.50  & 6.65 & 12.8 \\
P3 (GNI)               & 0.32  & 0.20 & 4.6 \\
P4 (GNI)               & 0.77  & 0.63 & 27.7 \\
\hline
\end{tabular}
\end{minipage}
\end{tabular}
}

\vspace{3pt}
\renewcommand\nota{{/\hspace{0.5pt}}}
\caption{
Experimental results for $k$-safety properties. Time is measured in seconds. ``---'' represents timeouts after 20 minutes. ``\nota'' denotes benchmarks not present in the respective tool's suite.\\
In benchmark names, [FV19] refers to~\cite{DBLP:conf/cav/FarzanV19};
[BF22] refers to~\cite{DBLP:conf/cav/BeutnerF22}.}
\label{eval:table-results}
\vspace{-1.5em}
\end{table}

\paragraph{Experiments}

To demonstrate the effectiveness of \ours 
we compare to HyPA~\cite{DBLP:conf/cav/BeutnerF22}, the most recent approach of formal verification of $\forall^* \exists^*$-hyperproperties, which employs a construction using automata. 
To exhibit the benefits of the direct CHC encoding we also compare the $k$-safety examples to 
PCSat~\cite{DBLP:conf/cav/UnnoTK21} and \textsc{Pdsc}~\cite{DBLP:conf/cav/ShemerGSV19}. Both encode the $k$-safety problem using FOL formulas as in \Cref{safety:formulas:arbiter}. PCSat uses a specialized solver for pfwCSP (a fragment of FOL that includes these formulas), while \textsc{Pdsc}
solves the FOL formulas by enumerating alignments and using a CHC solver for each alignment.
%
%
We do not compare to game solvers since, as reported by~\cite{DBLP:conf/cav/BeutnerF22}, state-of-the-art infinite-state game solvers, such as~\cite{DBLP:journals/pacmpl/FarzanK18,DBLP:conf/cav/BaierCFFJS21}, which work without user-provided predicates, are unable to solve the benchmarks we consider.

We run \ours on the full set of benchmarks, and each of the other tools on the ones 
included in their benchmark suite. This is because each tool has its own input format: HyPA and \textsc{Pdsc} each has its own representation for the transition system and the property; 
PCSat accepts 
pfwCSP instances that are constructed manually. 
Some of the benchmarks are common to several tools. 




All experiments are run on an AMD EPYC 74F3 with 32GB of memory.
HyPA and PCSat are executed in Docker using their published artifacts.

\paragraph{Results}
The performance measurements of the tools  for the $k$-safety benchmarks and for the $\forall^*\exists^*$ benchmarks are shown in \Cref{eval:table-results}. The results for the comparator examples are deferred to the extended version of the paper~\cite{itzhaky2023hyperproperty}. 
\ours is tested in two modes:
with predicate abstraction (``PA'') and without  (``concrete'').
HyPA and \textsc{Pdsc} require predefined predicates (the same predicates are used in all tools), while PCSat does not, but uses hints to solve `array insert' and `squares sum'.
\ours solves almost all of the benchmarks with PA in under a second, outperforming previous approaches by up to two orders of magnitude;
and also solves most of the benchmarks quickly without PA, esp. the $\forall^*\exists^*$ properties. 
In particular, \ours solves the two array benchmarks, while  HyPA and PCSat do not support arrays and only solve simplified versions with integers. 
The runtime of \ours (both with and without predicates) on the comparator examples is similar to the runtime of \textsc{Pdsc} (see ~\cite{itzhaky2023hyperproperty}\SH{add ref}\yv{this one?}), 
where \ours solves some benchmarks that \textsc{Pdsc} does not. (The other tools do not include these benchmarks.) On the unsafe examples, \ours provides a concrete counterexample, while \textsc{Pdsc} is only able to determine that there is no inductive invariant and alignment expressible with the given set of predicates.

\section{Related Work}
\label{sec:related}

There is a large body of work studying verification of hyperproperties.
While earlier verification
techniques mostly focus on $k$-safety properties, or specific examples such as program equivalence, monotonocity, determinism~\cite{DBLP:conf/csfw/BartheDR04,DBLP:conf/sas/TerauchiA05,DBLP:conf/fm/BartheCK11,DBLP:journals/stvr/GodlinS13,DBLP:conf/pldi/SousaD16,DBLP:conf/cav/YangVSGM18,DBLP:conf/esop/EilersMH18,DBLP:conf/cav/FarzanV19,DBLP:conf/cav/ShemerGSV19,ACM:conf/popl/Antonopoulos23},
lately verification of non-safety
hyperproperties has been studied~\cite{DBLP:conf/lfcs/BartheCK13,DBLP:conf/cav/CoenenFST19,DBLP:conf/cav/UnnoTK21,DBLP:conf/csfw/BeutnerF22,DBLP:conf/cav/BeutnerF22}.
Below we discuss the works closest to ours.

\paragraph{$k$-Safety}
Automatic verification of $k$-safety properties can be achieved by reducing the problem to
a standard safety verification problem by means of self-composition~\cite{DBLP:conf/csfw/BartheDR04},
product-programs~\cite{DBLP:conf/fm/BartheCK11}, and their derivatives~\cite{DBLP:conf/cav/YangVSGM18,DBLP:conf/esop/EilersMH18}.
Recently, however, it was
identified that the alignment of the different copies has a substantial effect over
the complexity of the verification problem~\cite{DBLP:conf/cav/ShemerGSV19,DBLP:conf/cav/FarzanV19,DBLP:conf/pldi/ChurchillP0A19}.
Our approach is most related to the technique of Shemer \etal~\cite{DBLP:conf/cav/ShemerGSV19},
which uses a semantic alignment that chooses which copy of the system performs a move
based on the composed state of the different copies.
They suggest an
algorithm that iterates through the set of possible semantic alignments, such
that in each iteration a CHC solver tries to prove the property, with the chosen alignmnet, using predicate
abstraction. Unlike~\cite{DBLP:conf/cav/ShemerGSV19}, \ours delegates the search for the alignment
to the CHC solver, together with the search for the invariant, making the algorithm
less dependent on predicate abstraction. Moreover, while~\cite{DBLP:conf/cav/ShemerGSV19} is restricted
to $k$-safety only, 
our technique 
can handle $k$-safety as well as the more general \aehyper.

\paragraph{$\forall^* \exists^*$ Hyperproprties}
Recently, verification of $\forall^*\exists^*$ hyperproperties has been studied,
targeting both finite and infinite systems~\cite{DBLP:conf/cav/UnnoTK21,DBLP:conf/cav/CoenenFST19,DBLP:conf/cav/BeutnerF22}.
Unno \etal~\cite{DBLP:conf/cav/UnnoTK21}
present an approach based on an encoding of hyperproperties verification as
satisfiability of formulas in FOL that extend Horn form with disjunctions,
existential quantification and well founded relations.
Deciding satisfiability
of the generated set of formulas is based on a variant of the CEGIS framework.
\ours is different as it encodes \aehyper verification as a set
of CHCs, which does not require a specialized solver and can use
any off-the-shelf CHC solver.
Coenen et.al.~\cite{DBLP:conf/cav/CoenenFST19} suggested a game-based approach for verification of $\forall^*\exists^*$ properties over finite-state systems, which was then extended by
Beutner \etal~\cite{DBLP:conf/cav/BeutnerF22} to handle infinite-state systems.
Similarly to~\cite{DBLP:conf/cav/BeutnerF22}, we use game semantics to
solve $\forall^*\exists^*$ problems, but do not
require building the game-graph in order to solve the game, instead reducing
the game solution to satisfiability of CHCs.
It is important
to note that in the case of infinite branching degree, while the
approach in~\cite{DBLP:conf/cav/BeutnerF22}
explicitly checks for emptiness of restrictions in hindsight, i.e., after they are used in a strategy, and
removes them iteratively if needed, \ours embeds
the emptiness requirements into the set of CHCs. 
Recently, \citet{DBLP:conf/csfw/BeutnerF22} extended the game-based approach
to use prophecy variables as a way to achieve completeness of the reduction to games.
Extending our approach to this case is a promising avenue for future research.

\paragraph{Relational CHCs}
\citet{DBLP:conf/fmcad/MordvinovF19}
present a method for discovering relational solutions to CHCs.
Their setting is different: the inputs are CHCs that serve as the definition of the transitions,
and synchronization is between sets of unknown predicates; at the current state, only lock-step semantics is considered.
Furthermore, their algorithm extends and modifies \Spacer~\cite{DBLP:conf/cav/KomuravelliGC14}, while our approach can use any CHC solver without modification.

\paragraph{Infinite-State Game Solving}
%
Our approach for verifying $\forall^* \exists^*$ hyperproperties
is based on the game semantics of \aehyper proposed in~\cite{DBLP:conf/cav/CoenenFST19,DBLP:conf/cav/BeutnerF22}. 
However, we do not propose a general game solving algorithm. Instead, 
we use the game semantics to come up with a first-order encoding of hyperproperty verification problems,
which is then reduced to CHC solving. This allows us to use any CHC solver when solving the hyperproperty game.
There is a large body of work on solving infinite-state games~\cite{DBLP:conf/concur/AlfaroHM01,DBLP:conf/popl/BeyeneCPR14,DBLP:conf/fmcad/WalkerR14,DBLP:journals/pacmpl/FarzanK18}.  
%
The game solving approach in ~\cite{DBLP:conf/fmcad/WalkerR14} uses three-valued predicate abstraction to reduce the problem to finite-state game solving and requires to iteratively refine the controllable predecessor operator when computing candidate winning states. The approach in~\cite{DBLP:journals/pacmpl/FarzanK18} targets games defined over the theory of linear real arithmetic and is based on an unrolling of the game and the use of Craig interpolants~\cite{craig57} to synthesize a winning strategy. The game solver in~\cite{DBLP:conf/cav/BaierCFFJS21} is not restricted to a given FOL theory, but requires an interpolation procedure in order to compute sub-goals that are used to inductively split a game into sub-games.
As reported by~\cite{DBLP:conf/cav/BeutnerF22}, game solving approaches~\cite{DBLP:journals/pacmpl/FarzanK18,DBLP:conf/cav/BaierCFFJS21},
which work without a provided set of predicates, are unable to handle the infinite-state games for the benchmarks we consider.  Moreover, the approaches in~\cite{DBLP:journals/pacmpl/FarzanK18,DBLP:conf/cav/CoenenFST19,DBLP:conf/cav/BaierCFFJS21,DBLP:conf/cav/BeutnerF22} cannot handle games that are defined using formulas over the theory of arrays, which are part of our benchmark. 
%
%
%
The approach of~\citet{DBLP:conf/popl/BeyeneCPR14} to solving games over infinite graphs
is based on reduction of games (including safety games) to CHCs. However,
unlike the reduction presented in this paper, in~\cite{DBLP:conf/popl/BeyeneCPR14}
the games are encoded in a different fragment of Horn, namely $\forall\exists$-Horn
where the head predicates can contain existential quantifiers.
%
More recently (and concurrently with our work),~\cite{Faella_Parlato_2023} proposed a new reduction of game solving to CHC solving.
Their approach handles safety games in which the branching degree of the ``safe'' player (the verifier in our setting) is bounded.
In contrast, our encoding supports also infinite branching with the restrictions mechanism.
Moreover, they do not support predicate abstraction, which is crucial for solving some of our benchmarks.
\paragraph{Restrictions as Underapproximations} The use of restrictions as underapproxmations of the transition relation, inspired by~\citet{DBLP:conf/cav/BeutnerF22}, corresponds to the use of must hyper-transitions~\cite{DBLP:conf/lics/LarsenX90} in abstract transition systems~\cite{DBLP:conf/tacas/ShohamG04,DBLP:conf/lics/DamsN04} and games~\cite{DBLP:conf/lics/AlfaroGJ04,DBLP:conf/concur/AlfaroR07}.
Similarly to~\citet{DBLP:conf/popl/GodefroidNRT10,
acm/CookK2013}, we use such underapproximations to replace an existential quantifier by universal quantification \emph{within} the restriction.


\section{Conclusion}
\label{sec:conc}

We introduced a translation of a family of non-Horn first-order formulas to CHCs.
This translation led to the first CHC encoding of a simultaneous inference of  an invariant and an alignment for verifying $k$-safety properties.
While the transformation itself is rather simple, identifying it was not straightforward and alluded previous works on the topic.
We have further extended the CHC encoding to infer a witness function for existentially quantified traces  arising in the verification of  \aehyper properties. 
Our experiments exhibit significant improvement over state-of-the-art hyperproperty verifiers thanks to the existence of advanced off-the-shelf CHC solvers,
whose efficacy is expected to improve even further. The approach shows promising capabilities in solving (many) hyeprproperty verification problems completely automatically. In some cases, predicates still have to be provided by the user, a limitation that we hope to overcome in the future by automatic inference of predicates. 
Applying (or extending) the transformation to obtain CHC encoding for other verification fragments is an interesting direction for future work.

\subsubsection*{Acknowledgment} The research leading to these results has received funding from the
European Research Council under the European Union's Horizon 2020 research and innovation programme (grant agreement No [759102-SVIS]). This research was partially supported by the Israeli Science Foundation (ISF) grant No. 2875/21 and No. 2117/23,
and by the NSF-BSF grant No. 2018675.





\bibliography{refs,refs2}

\clearpage
\appendix
\section{Comprehensive Comparison of the Comparators}
\label{annex:comparators}

The comparators benchmark suite from \citet{DBLP:conf/pldi/SousaD16} contains
a set of comparator functions used in data ordering.
Each comparator function is expected to satisfy the following three properties:
\begin{enumerate}
    \item[(P1)] asymmetry: $\forall x, y.~ \lnot(x < y \land x > y)$
    \item[(P2)] transitivity: $\forall x, y, z.~ (x \leq y \land y \leq z) \rightarrow x \leq z$
    \item[(P3)] congruence: $\forall x, y, z.~ x = y \rightarrow (x < z \leftrightarrow y < z)$
\end{enumerate}

The functions accept as input objects to compare, with may contain integers and arrays.
Per convention, the return value is $-1$ if $x < y$, $0$ if $x = y$, and $1$ if $x > y$.

These benchmarks were evaluated in \citet{DBLP:conf/cav/ShemerGSV19} with their tool, \textsc{Pdsc}, where performance comparable to \textsc{Synonym}~\cite{DBLP:conf/cav/PickFG18} was demonstrated.
We ran the same set of benchmarks and compared the running time of \ours with that of \textsc{Pdsc}.
Like in \Cref{eval}, we tested \ours in two modes, with predicate abstraction and without it.
In both cases, \textsc{Pdsc} results represents execution with predicates, because the latter does not have a concrete solving mode.

Results are shown in \Cref{annex:hyhorn-pdsc-plot}.
Preprocessing time is omitted from the comparison and only solver time of both tools is compared.
\ours outperforms \textsc{Pdsc} on most benchmarks when using predicate abstraction.
Without predicates, \ours still outperforms on some cases, and is not much slower than \textsc{Pdsc} on others even though \textsc{Pdsc} is using the abstraction.

Furthermore, \textsc{Pdsc} and \ours with PA verified 54 instances in total (the same instances), whereas the concrete mode of \ours verified 61 instances.

Raw data is also included in the table on the next page.
Correct comparator functions are labeled with ``-true'', buggy versions are labeled with ``-false''.
The ``out'' column tells whether the tool was able to verify the instance.
``$i$'' is the number of SMT queries used by \textsc{Pdsc} (\ours always uses exactly one),
and ``\#$p$'' is the number of predicate used for the abstraction.

\begin{figure}[b]
    \newcommand\tthesy{\textrm{T}}
\newcommand\thipster{\textrm{H}}

\newcommand\fwith{w\hspace{-1pt}/\hspace{-1pt}\xspace}
\newcommand\fwithout{w\hspace{-1pt}/\hspace{-1pt}o\xspace}

\begin{tikzpicture}
\coordinate (O) at (0,0);
\begin{axis}[
    title={PA},
    at=(O),
    draw={black!30!white},
    width=5.5cm, height=5.5cm,
    grid=both,
    xmode=log, ymode=log,
    xmin=0.05, xmax=10,
    ymin=0.05, ymax=10,
    ticklabel style={font=\tiny},
    x label style={at={(axis description cs:0.5,-0.07)},anchor=north},
    y label style={at={(axis description cs:-0.15,.5)},anchor=south},
    xlabel={\textsc{Pdsc}},
    ylabel={\ours}
]
\addplot[  
    only marks,
    scatter,
    scatter src=(y-x)/4,
    point meta min=-.6, point meta max=1,
    mark size=1.25pt]
table[x=solver,y=hyhorn]{gfx/scatter-plot.dat.txt};
\end{axis}
\draw[color=olive!60!white] (O) -- +(4cm, 4cm);


\coordinate (O) at (6.3cm,0);
\begin{axis}[
    title={Concrete},
    at=(O),
    draw={black!30!white},
    width=5.5cm, height=5.5cm,
    grid=both,
    xmode=log, ymode=log,
    xmin=0.05, xmax=10,
    ymin=0.05, ymax=10,
    ticklabel style={font=\tiny},
    x label style={at={(axis description cs:0.5,-0.07)},anchor=north},
    y label style={at={(axis description cs:-0.15,.5)},anchor=south},
    xlabel={\hspace{0.5cm}\textsc{Pdsc} {\fontsize{7pt}{7pt}\selectfont (w/ PA)}},
    ylabel={\ours}
]
\addplot[  
    only marks,
    scatter,
    scatter src=(y-x)/4,
    point meta min=-.5, point meta max=1,
    mark size=1.25pt]
table[x=solver,y=hyhorn-pa]{gfx/scatter-plot.dat.txt};
\end{axis}
\draw[color=olive!60!white] (O) -- +(4cm, 4cm);
\end{tikzpicture}
    \caption{Solver time for \ours vs. \textsc{Pdsc} on the comparator benchmarks (\cite{DBLP:conf/pldi/SousaD16}).
    Time is in seconds, scale is logarithmic.
    \ours was run in two modes: with predicate abstraction ("PA") and without ("Concrete").
    \textsc{Pdsc} was run with PA in both cases because this is the only mode it supports.}
    \label{annex:hyhorn-pdsc-plot}
\end{figure}

\begin{table}
\begin{tabular}{@{}ll@{}}
\scalebox{0.5}{
\begin{minipage}{0.9\textwidth}
\begin{tabular}{|lc|rrcrr|rcrc|}
\hline
 & & \multicolumn{5}{c|}{\textsc{Pdsc} (PA)} & \multicolumn{4}{c|}{\ours} \\
 & & \multicolumn{2}{c|}{time} & & & & \multicolumn{2}{c|}{PA} & \multicolumn{2}{c|}{Concrete} \\
Name & prop & \multicolumn{1}{c}{p.p.} & \multicolumn{1}{c|}{solver} & out & $i$ & \#$p$ & solver & \multicolumn{1}{c|}{out} & solver & out \\
\hline
ArrayInt-false  &  1  &  0.072  &  0.263  &  Y  &  1  &  8  &  0.158  &  Y  &  0.280  &  Y   \\
ArrayInt-false  &  2  &  0.261  &  0.488  &  Y  &  1  &  14  &  0.361  &  Y  &  0.484  &  Y   \\
ArrayInt-false  &  3  &  0.070  &  0.301  &  N  &  2  &  17  &  0.773  &  N  &  1.430  &  N   \\
ArrayInt-true  &  1  &  0.077  &  0.280  &  Y  &  1  &  8  &  0.065  &  Y  &  0.316  &  Y   \\
ArrayInt-true  &  2  &  0.267  &  0.504  &  Y  &  1  &  14  &  0.374  &  Y  &  0.594  &  Y   \\
ArrayInt-true  &  3  &  0.332  &  0.579  &  Y  &  1  &  17  &  0.380  &  Y  &  0.485  &  Y   \\
CatBPos-false  &  1  &  0.021  &  0.145  &  N  &  1  &  11  &  0.044  &  N  &  0.092  &  N   \\
CatBPos-false  &  2  &  0.039  &  0.202  &  N  &  2  &  20  &  0.153  &  N  &  0.232  &  N   \\
CatBPos-false  &  3  &  0.042  &  0.214  &  N  &  2  &  23  &  0.185  &  N  &  0.234  &  N   \\
Chromosome-false  &  1  &  0.040  &  0.903  &  N  &  1  &  7  &  0.067  &  N  &  0.612  &  Y   \\
Chromosome-false  &  2  &  0.074  &  1.018  &  N  &  2  &  12  &  0.257  &  N  &  0.926  &  N   \\
Chromosome-false  &  3  &  0.124  &  1.072  &  N  &  2  &  15  &  0.261  &  N  &  0.942  &  N   \\
Chromosome-true  &  1  &  0.022  &  0.236  &  N  &  1  &  8  &  0.042  &  N  &  0.139  &  Y   \\
Chromosome-true  &  2  &  0.061  &  0.315  &  N  &  2  &  14  &  0.397  &  N  &  0.324  &  Y   \\
Chromosome-true  &  3  &  0.395  &  0.649  &  Y  &  1  &  17  &  0.321  &  Y  &  0.241  &  Y   \\
ColItem-false  &  1  &  0.021  &  0.110  &  N  &  1  &  9  &  0.042  &  N  &  0.061  &  N   \\
ColItem-false  &  2  &  0.040  &  0.158  &  N  &  2  &  15  &  0.136  &  N  &  0.228  &  N   \\
ColItem-false  &  3  &  0.081  &  0.197  &  N  &  2  &  18  &  0.169  &  N  &  0.218  &  N   \\
ColItem-true  &  1  &  0.045  &  0.148  &  Y  &  1  &  9  &  0.045  &  Y  &  0.148  &  Y   \\
ColItem-true  &  2  &  0.076  &  0.205  &  Y  &  1  &  15  &  0.106  &  Y  &  0.246  &  Y   \\
ColItem-true  &  3  &  0.120  &  0.255  &  Y  &  1  &  18  &  0.118  &  Y  &  0.215  &  Y   \\
Contact-false  &  1  &  0.057  &  0.162  &  Y  &  1  &  7  &  0.049  &  Y  &  1.271  &  Y   \\
Contact-false  &  2  &  0.046  &  0.181  &  N  &  2  &  12  &  0.146  &  N  &  0.366  &  N   \\
Contact-false  &  3  &  0.048  &  0.190  &  N  &  2  &  15  &  0.169  &  N  &  0.322  &  N   \\
Container-false-v1  &  1  &  0.016  &  0.087  &  N  &  1  &  9  &  0.038  &  N  &  0.055  &  N   \\
Container-false-v1  &  2  &  0.032  &  0.122  &  N  &  2  &  15  &  0.128  &  N  &  0.143  &  N   \\
Container-false-v1  &  3  &  0.095  &  0.190  &  Y  &  1  &  18  &  0.102  &  Y  &  0.115  &  Y   \\
Container-false-v2  &  1  &  0.017  &  0.098  &  N  &  1  &  9  &  0.039  &  N  &  0.043  &  N   \\
Container-false-v2  &  2  &  0.032  &  0.138  &  N  &  2  &  15  &  0.134  &  N  &  0.171  &  N   \\
Container-false-v2  &  3  &  0.063  &  0.176  &  N  &  2  &  18  &  0.160  &  N  &  0.182  &  N   \\
Container-true  &  1  &  0.050  &  0.139  &  Y  &  1  &  9  &  0.045  &  Y  &  0.124  &  Y   \\
Container-true  &  2  &  0.082  &  0.197  &  Y  &  1  &  15  &  0.103  &  Y  &  0.278  &  Y   \\
Container-true  &  3  &  0.122  &  0.241  &  Y  &  1  &  18  &  0.106  &  Y  &  0.221  &  Y   \\
DataPoint-false  &  1  &  0.023  &  0.128  &  N  &  1  &  7  &  0.046  &  N  &  0.071  &  N   \\
DataPoint-false  &  2  &  0.043  &  0.168  &  N  &  2  &  12  &  0.139  &  N  &  0.292  &  N   \\
DataPoint-false  &  3  &  0.043  &  0.172  &  N  &  2  &  15  &  0.153  &  N  &  0.397  &  N   \\
FileItem-false  &  1  &  0.034  &  0.080  &  Y  &  1  &  5  &  0.033  &  Y  &  0.078  &  Y   \\
FileItem-false  &  2  &  0.047  &  0.105  &  Y  &  1  &  9  &  0.062  &  Y  &  0.083  &  Y   \\
FileItem-false  &  3  &  0.031  &  0.096  &  N  &  2  &  12  &  0.117  &  N  &  0.135  &  N   \\
FileItem-true  &  1  &  0.038  &  0.101  &  Y  &  1  &  5  &  0.037  &  Y  &  0.102  &  Y   \\
FileItem-true  &  2  &  0.061  &  0.149  &  Y  &  1  &  9  &  0.074  &  Y  &  0.151  &  Y   \\
FileItem-true  &  3  &  0.090  &  0.180  &  Y  &  1  &  12  &  0.081  &  Y  &  0.171  &  Y   \\
IsoSprite-false-v1  &  1  &  0.013  &  0.064  &  N  &  1  &  1  &  0.031  &  N  &  0.037  &  N   \\
IsoSprite-false-v1  &  2  &  0.026  &  0.097  &  N  &  2  &  9  &  0.105  &  N  &  0.147  &  N   \\
IsoSprite-false-v1  &  3  &  0.029  &  0.105  &  N  &  2  &  12  &  0.125  &  N  &  0.142  &  N   \\
IsoSprite-false-v2  &  1  &  0.027  &  0.200  &  N  &  1  &  13  &  0.051  &  N  &  0.066  &  N   \\
IsoSprite-false-v2  &  2  &  0.053  &  0.276  &  N  &  2  &  21  &  0.192  &  N  &  0.313  &  N   \\
IsoSprite-false-v2  &  3  &  0.175  &  0.399  &  Y  &  1  &  24  &  0.145  &  Y  &  0.235  &  Y   %
\\ \hline
\end{tabular}
\end{minipage}
}
&
\scalebox{0.5}{
\begin{minipage}{0.9\textwidth}
\begin{tabular}{|lc|rrcrr|rcrc|}
\hline
 & & \multicolumn{5}{c|}{\textsc{Pdsc} (PA)} & \multicolumn{4}{c|}{\ours} \\
 & & \multicolumn{2}{c|}{time} & & & & \multicolumn{2}{c|}{PA} & \multicolumn{2}{c|}{Concrete} \\
Name & prop & \multicolumn{1}{c}{p.p.} & \multicolumn{1}{c|}{solver} & out & $i$ & \#$p$ & solver & \multicolumn{1}{c|}{out} & solver & out \\
\hline
Match-false  &  1  &  0.017  &  0.092  &  N  &  1  &  7  &  0.040  &  N  &  0.046  &  N   \\
Match-false  &  2  &  0.048  &  0.145  &  Y  &  1  &  12  &  0.075  &  Y  &  0.093  &  Y   \\
Match-false  &  3  &  0.034  &  0.134  &  N  &  2  &  15  &  0.157  &  N  &  0.204  &  N   \\
Match-true  &  1  &  0.042  &  0.116  &  Y  &  1  &  7  &  0.039  &  Y  &  0.088  &  Y   \\
Match-true  &  2  &  0.064  &  0.154  &  Y  &  1  &  12  &  0.081  &  Y  &  0.112  &  Y   \\
Match-true  &  3  &  0.097  &  0.191  &  Y  &  1  &  15  &  0.086  &  Y  &  0.140  &  Y   \\
NameComparator-false  &  1  &  0.024  &  0.154  &  N  &  1  &  6  &  0.079  &  N  &  0.142  &  N   \\
NameComparator-false  &  2  &  0.219  &  0.379  &  Y  &  1  &  12  &  0.270  &  Y  &  0.537  &  Y   \\
NameComparator-false  &  3  &  0.368  &  0.539  &  Y  &  1  &  15  &  0.254  &  Y  &  0.553  &  Y   \\
NameComparator-true  &  1  &  0.075  &  0.241  &  Y  &  1  &  8  &  0.060  &  Y  &  0.455  &  Y   \\
NameComparator-true  &  2  &  0.351  &  0.556  &  Y  &  1  &  15  &  0.481  &  Y  &  1.266  &  Y   \\
NameComparator-true  &  3  &  0.194  &  0.403  &  Y  &  1  &  18  &  0.429  &  Y  &  0.693  &  Y   \\
Node-false  &  1  &  0.068  &  0.259  &  Y  &  1  &  11  &  0.053  &  Y  &  0.251  &  Y   \\
Node-false  &  2  &  0.114  &  0.333  &  Y  &  1  &  18  &  0.123  &  Y  &  0.159  &  Y   \\
Node-false  &  3  &  0.083  &  0.309  &  N  &  2  &  21  &  0.216  &  N  &  0.490  &  N   \\
Node-true  &  1  &  0.066  &  0.295  &  Y  &  1  &  11  &  0.051  &  Y  &  0.204  &  Y   \\
Node-true  &  2  &  0.097  &  0.361  &  Y  &  1  &  18  &  0.124  &  Y  &  0.421  &  Y   \\
Node-true  &  3  &  0.154  &  0.422  &  Y  &  1  &  21  &  0.129  &  Y  &  0.882  &  Y   \\
NzbFile-false  &  1  &  0.038  &  0.523  &  N  &  1  &  9  &  0.060  &  N  &  0.483  &  N   \\
NzbFile-false  &  2  &  0.141  &  0.688  &  Y  &  1  &  15  &  0.188  &  Y  &  1.161  &  Y   \\
NzbFile-false  &  3  &  0.187  &  0.732  &  Y  &  1  &  18  &  0.168  &  Y  &  1.118  &  Y   \\
NzbFile-true  &  1  &  0.075  &  0.498  &  N  &  1  &  14  &  0.115  &  N  &  0.443  &  Y   \\
NzbFile-true  &  2  &  0.174  &  0.678  &  N  &  2  &  23  &  0.477  &  N  &  1.827  &  Y   \\
NzbFile-true  &  3  &  0.154  &  0.656  &  N  &  2  &  26  &  0.556  &  N  &  3.002  &  Y   \\
Solution-false  &  1  &  0.061  &  0.178  &  Y  &  1  &  7  &  0.047  &  Y  &  0.732  &  Y   \\
Solution-false  &  2  &  0.096  &  0.227  &  Y  &  1  &  13  &  0.115  &  Y  &  1.559  &  Y   \\
Solution-false  &  3  &  0.093  &  0.231  &  N  &  2  &  16  &  0.166  &  N  &  0.591  &  N   \\
Solution-true  &  1  &  0.054  &  0.162  &  Y  &  1  &  7  &  0.054  &  Y  &  1.197  &  Y   \\
Solution-true  &  2  &  0.092  &  0.225  &  Y  &  1  &  13  &  0.131  &  Y  &  2.974  &  Y   \\
Solution-true  &  3  &  0.134  &  0.274  &  Y  &  1  &  16  &  0.205  &  Y  &  1.034  &  Y   \\
TextPosition-false  &  1  &  0.050  &  0.164  &  Y  &  1  &  9  &  0.049  &  Y  &  0.222  &  Y   \\
TextPosition-false  &  2  &  0.044  &  0.190  &  N  &  2  &  15  &  0.148  &  N  &  0.345  &  N   \\
TextPosition-false  &  3  &  0.044  &  0.197  &  N  &  2  &  18  &  0.178  &  N  &  0.366  &  N   \\
TextPosition-true  &  1  &  0.069  &  0.203  &  Y  &  1  &  11  &  0.053  &  Y  &  0.350  &  Y   \\
TextPosition-true  &  2  &  0.108  &  0.281  &  Y  &  1  &  18  &  0.135  &  Y  &  0.462  &  Y   \\
TextPosition-true  &  3  &  0.147  &  0.327  &  Y  &  1  &  21  &  0.148  &  Y  &  0.418  &  Y   \\
Time-false  &  1  &  0.013  &  0.048  &  N  &  1  &  5  &  0.032  &  N  &  0.048  &  N   \\
Time-false  &  2  &  0.042  &  0.088  &  Y  &  1  &  9  &  0.058  &  Y  &  0.083  &  Y   \\
Time-false  &  3  &  0.021  &  0.070  &  Y  &  1  &  12  &  0.049  &  Y  &  0.042  &  Y   \\
Time-true  &  1  &  0.031  &  0.072  &  Y  &  1  &  5  &  0.032  &  Y  &  0.053  &  Y   \\
Time-true  &  2  &  0.041  &  0.089  &  Y  &  1  &  9  &  0.060  &  Y  &  0.107  &  Y   \\
Time-true  &  3  &  0.066  &  0.115  &  Y  &  1  &  12  &  0.065  &  Y  &  0.095  &  Y   \\
Word-false  &  1  &  0.050  &  0.495  &  N  &  1  &  11  &  0.102  &  N  &  0.381  &  N   \\
Word-false  &  2  &  0.175  &  0.691  &  N  &  2  &  19  &  0.673  &  N  &  1.360  &  Y   \\
Word-false  &  3  &  0.046  &  0.570  &  Y  &  1  &  22  &  0.201  &  Y  &  0.575  &  Y   \\
Word-true  &  1  &  0.079  &  0.387  &  Y  &  1  &  10  &  0.066  &  Y  &  0.374  &  Y   \\
Word-true  &  2  &  0.183  &  0.545  &  Y  &  1  &  17  &  0.326  &  Y  &  1.096  &  Y   \\
Word-true  &  3  &  0.415  &  0.779  &  Y  &  1  &  20  &  0.317  &  Y  &  0.807  &  Y   
\\
\hline
\end{tabular}
\end{minipage}
}
\end{tabular}
\end{table}

\end{document}